\documentclass{article}  

\newif\ifpubrel
\pubrelfalse             

\author{Daniel J.  Dougherty$^1$ \and Joshua D. Guttman$^{1,2}$}
\title{Molly: A Verified Compiler for Cryptoprotocol Roles}
\date{%
  $^1$Worcester Polytechnic Institute \\%
  $^2$The MITRE Corporation %
}

\newif\ifllncs
\llncsfalse  

\usepackage{url}
\usepackage{amssymb}
\usepackage{amsmath}
\usepackage{amsthm}

\allowdisplaybreaks

\usepackage[all]{xy}
\usepackage{xcolor}

\usepackage{stmaryrd}

\usepackage{pdfsync} 



\newif\ifproofs




\newif\ifnotabene






\newcommand{\cpsa}{\textsc{cpsa}}





\newcommand{\option}[1]{\ensuremath {{#1}}_{\bot}}
\newcommand{\pow}[1]{\ensuremath 2^{#1}}


\newcommand{\inv}[1]{\ensuremath{ {#1}^{-1} }}



\newcommand{\nat}{\mathbb{N}}




\usepackage{fancyhdr} 
\setlength{\headheight}{14pt}  
\pagestyle{fancy}
\fancyhf{}
\rhead{\leftmark}
\lhead{}
\cfoot{\thepage}

\usepackage{comment}
\usepackage{xspace}
\usepackage{listings}
\usepackage{enumitem}
\usepackage{multicol}

\usepackage[xcolor,outerbars]{changebar}  
\cbcolor{red}
\usepackage{tikz-cd}

\usepackage{hyperref}
\hypersetup{colorlinks=true,citecolor=red}

\setlength{\parindent}{0pt} 
\setlength{\parskip}{\medskipamount}

\theoremstyle{plain}   
\newtheorem{theorem}{Theorem}  [section]
\newtheorem*{theorem*}{Theorem}
\newtheorem{lemma}[theorem]{Lemma}
\newtheorem{lemma*}{Lemma}
\newtheorem{proposition}[theorem]{Proposition}
\newtheorem*{proposition*}{Proposition}

\newtheorem*{corollary*}{Corollary}


\theoremstyle{definition}  
\newtheorem{definition}[theorem]{Definition}
\newtheorem*{definition*}{Definition}
\newtheorem{notation}[theorem]{Notation}
\newtheorem*{notation*}{Notation}
\newtheorem{remark}[theorem]{Remark}
\newtheorem*{remark*}{Remark}

\newtheorem*{note*}{Note}

\newtheorem*{notes*}{Notes}

\newtheorem*{observation*}{Observation}

\newtheorem*{fact*}{Fact}

\newtheorem*{facts*}{Facts}

\newtheorem*{caution*}{Caution!}

\newtheorem{example}{Example}
\newtheorem*{example*}{Example}

\newtheorem*{examples*}{Examples}


\newcount\timehh\newcount\timemm \timehh=\time
\divide\timehh by 60 \timemm=\time
\count255=\timehh\multiply\count255 by -60 \advance\timemm by \count255
\newcommand{\timestamp}{
  {\protect\small\sl\today\ --
    \ifnum\timehh<10 0\fi\number\timehh\,:\,
    \ifnum\timemm<10 0\fi\number\timemm}}

\def\refsec#1{Section~\ref{#1}}
\def\refdef#1{Definition~\ref{#1}}
\def\reflem#1{Lemma~\ref{#1}}
\def\refprop#1{Proposition~\ref{#1}}
\def\refthm#1{Theorem~\ref{#1}}

\def\refeg#1{Example~\ref{#1}}



\newcommand{\eqdef}{\overset{\mbox{\tiny def}}{=}}  


\newcommand{\ie}{\textit{i.e.}\xspace}

\newcommand{\eg}{\textit{e.g.}\xspace}










\newcounter{running}[section]
  {\setcounter{running}{\value{enumi}}\end{enumerate}}

\newcounter{runningtwo}[section]
  {\setcounter{runningtwo}{\value{enumii}}\end{enumerate}}




\usepackage{bussproofs}

\newlength{\twoheadspace}  
\settowidth{\twoheadspace}{.}
\addtolength{\twoheadspace}{-.5\twoheadspace}
\newlength{\backupamount}  
\settowidth{\backupamount}{$\longrightarrow$}
\addtolength{\backupamount}{\twoheadspace}
\newlength{\lrbackupamount} 
\settowidth{\lrbackupamount}{$\longleftrightarrow$}
\addtolength{\lrbackupamount}{\twoheadspace}













\usepackage{xcolor}

\definecolor{dkblue}{rgb}{0,0.1,0.6}

\definecolor{lightblue}{rgb}{0,0.5,0.5}
\definecolor{ltblue}{rgb}{0,0.4,0.4}

\definecolor{dkgreen}{rgb}{0,0.35,0}
\definecolor{dk2green}{rgb}{0.4,0,0}

\definecolor{dkviolet}{rgb}{0.3,0,0.5}

\definecolor{dkred}{rgb}{0.5,0,0}
\definecolor{bggray}{gray}{0.95}

\usepackage{sourcecodepro}

\usepackage[T1]{fontenc}


\let\Pr\undefined

\newcommand{\AnB}{Alice \& Bob}

\newcommand{\molly}{\textsc{Molly}\xspace}

\newcommand{\pub}[1]{\ensuremath{({pub}\; {#1} )}\xspace}
\newcommand{\pri}[1]{\ensuremath{({pri}\; {#1} )}\xspace}


\newcommand{\newsf}[2]{
    \newcommand{#1}{\ensuremath{\mathop{\mathsf{#2}}}\xspace}}

 \newcommand{\newbf}[2]{
   \newcommand{#1}{\ensuremath{\mathop{\mathbf{#2}}}\xspace}}

 \newcommand{\newtt}[2]{
   \newcommand{#1}{\ensuremath{\mathop{\mathtt{#2}}}\xspace}}


 \newcommand{\newrm}[2]{
   \newcommand{#1}{\ensuremath{\mathop{\mathrm{#2}}}\xspace}}


\newsf{\Sort}{Sort}

\newbf{\Chan}{chan}
\newbf{\Text}{text}
\newbf{\Data}{data}
\newbf{\Name}{name}
\newbf{\Skey}{skey}
\newbf{\Akey}{akey}
\newbf{\Ikey}{ikey} 

\newbf{\Mesg}{mesg}


\newrm{\strings}{string}
\newrm{\bool}{bool}
\newrm{\skeys}{skeys}
\newrm{\akeys}{akeys}


\newcommand{\newact}{\newsf}

\newact{\Actop}{Act}
\newcommand{\Act}[1]{\ensuremath({\Actop}\;{#1})}

\newact{\Prmop}{Prm}
\newcommand{\Prm}[1]{\ensuremath({\Prmop}\;{#1})}

\newact{\Retop}{Ret}
\newcommand{\Ret}[1]{\ensuremath({\Retop}\;{#1})}

\newact{\Rcvop}{Rcv}
\newcommand{\Rcv}[2]{\ensuremath({\Rcvop}\;{#1}\;{#2})}

\newact{\Sndop}{Snd}
\newcommand{\Snd}[2]{\ensuremath({\Sndop}\;{#1}\;{#2})}

\newcommand{\actmap}[1]{{#1}^{\Actop}}


\newcommand{\rtThy}{\ensuremath{\mathcal{R}}\xspace}

\newcommand{\newrt}{\newtt}

\newrm{\rtval}{Rtval}

\newrt{\rtsrt}{rtsort}

\newrt{\rtpair}{pair}
\newrt{\rtfrst}{frst}
\newrt{\rtscnd}{scnd}
\newrt{\rtencr}{encr}
\newrt{\rtdecr}{decr}
\newrt{\rthash}{hash}
\newrt{\rtquote}{quot}
\newrt{\rtgen}{gen}

\newrt{\rtpubof}{pubof}

\newrt{\rtsame}{same}
\newrt{\rtchcksrt}{checksrt}
\newrt{\rtchckhash}{checkhash}
\newrt{\rtchckquote}{checkquote}

\newrt{\rtkypr}{kypr}

\newrt{\rtinvop}{rtinv}
 \newcommand{\rtinv}[2]{\ensuremath({\rtinvop}\;{#1}\;{#2})}

\newrt{\true}{true}
\newrt{\false}{false}

\newcommand{\opteq}{\shortdownarrow}

\newsf{\tr}{tr}


\newrm{\Term}{Term}
\newrm{\Var}{var}

\newrm{\role}{role}
\newrm{\roles}{roles}
\newsf{\rl}{rl}

\newrm{\sortof}{sort}

\newcommand{\newtrm}{\newsf}

\newtrm{\Enop}{En}
\newcommand{\En}[2]{\ensuremath({\Enop}\;{#1}\;{#2})}

\newsf{\Prrop}{Pr}
\newcommand{\Prr}[2]{\ensuremath{(\Prrop}\;{#1}\;{#2})}
\newcommand{\Pr}{\Prr}

\newtrm{\Hsop}{Hs}
\newcommand{\Hs}[1]{\ensuremath({\Hsop}\;{#1})}

\newtrm{\Qtop}{Qt}
\newcommand{\Qt}[1]{\ensuremath({\Qtop}\;{#1})}

\newtrm{\Chop}{Ch}
\newcommand{\Ch}[1]{\ensuremath({\Chop}\;{#1})}

\newtrm{\Txop}{Tx}
\newcommand{\Tx}[1]{\ensuremath({\Txop}\;{#1})}

\newtrm{\Dtop}{Dt}
\newcommand{\Dt}[1]{\ensuremath({\Dtop}\;{#1})}

\newtrm{\Nmop}{Nm}
\newcommand{\Nm}[1]{\ensuremath({\Nmop}\;{#1})}

\newtrm{\Skop}{Sk}
\newcommand{\Sk}[1]{\ensuremath({\Skop}\;{#1})}

\newtrm{\Ikop}{Ik}
\newcommand{\Ik}[1]{\ensuremath({\Ikop}\;{#1})}

\newtrm{\Akop}{Ak}
\newcommand{\Ak}[1]{\ensuremath({\Akop}\;{#1})}

\newtrm{\Svop}{Sv}
\newcommand{\Sv}[1]{\ensuremath({\Svop}\;{#1})}

\newtrm{\Avop}{Av}
\newcommand{\Av}[1]{\ensuremath({\Avop}\;{#1})}

\newtrm{\Mgop}{Mg}

\newsf{\unv}{unv}

\newcommand{\tv}{\ensuremath{\tau}\xspace}
\newcommand{\rolvalop}{\tv}
\newcommand{\rolval}[2]{(\rolvalop \ {#1} \ {#2})}


\newsf{\pr}{pr}

\newrm{\proc}{proc}
\newrm{\procs}{procs}

\newrm{\Loc}{Loc}
\newcommand{\loc}{v}

\newrm{\stmt}{stmt}
\newrm{\event}{event}
\newrm{\bind}{bind}
\newrm{\chck}{check} 

\newrm{\Expr}{Expr}

\newcommand{\newexp}{\newsf}

\newexp{\Pairop}{Pair}
\newcommand{\Pair}[2]{\ensuremath({\Pairop}\;{#1}\;{#2})}

 \newexp{\Encrop}{Encr}
\newcommand{\Encr}[2]{\ensuremath({\Encrop}\;{#1}\;{#2})}

\newexp{\Hashop}{Hash}
\newcommand{\Hash}[1]{\ensuremath({\Hashop}\;{#1})}

\newexp{\Frstop}{Frst}
\newcommand{\Frst}[1]{\ensuremath({\Frstop}\;{#1})}
\newexp{\Scndop}{Scnd}
\newcommand{\Scnd}[1]{\ensuremath({\Scndop}\;{#1})}

\newexp{\Decrop}{Decr}
\newcommand{\Decr}[2]{ ( \ensuremath {\Decrop}\;{#1}\;{#2} ) }

\newexp{\Quotop}{Quot}
\newcommand{\Quot}[1]{\ensuremath({\Quotop}\;{#1})}

\newexp{\Genrop}{Genr}
\newcommand{\Genr}[2]{\ensuremath({\Genrop}\;{#1}\;{#2})}

\newexp{\Pubofop}{PubOf}
\newcommand{\PubOf}[1]{\ensuremath({\Pubofop}\;{#1})}

\newexp{\Priofop}{PriOf}

\newexp{\Paramop}{Param}
\newcommand{\Param}[1]{\ensuremath({\Paramop}\;{#1})}

\newexp{\Readop}{Read}
\newcommand{\Read}[1]{\ensuremath({\Readop}\;{#1})}

\newexp{\Recv}{Recv}

\newcommand{\newstmt}{\newsf}

\newstmt{\Bindop}{Bind}
\newcommand{\Bind}[3]{\ensuremath \Bindop \; ( {#1} , {#2}) \; {#3}}

\newstmt{\Checkop}{Check}


\newstmt{\CSameop}{CSame}
\newcommand{\CSame}[2]{\ensuremath({\CSameop}\;{#1}\;{#2})}


\newstmt{\CKyprop}{CKypr}
\newcommand{\CKypr}[2]{\ensuremath({\CKyprop}\;{#1}\;{#2})}

\newstmt{\CSrtop}{CSrt}
\newcommand{\CSrt}[2]{\ensuremath({\CSrtop}\;{#1}\;{#2})}

\newstmt{\CQotop}{CQot}
\newcommand{\CQot}[2]{\ensuremath({\CQotop}\;{#1}\;{#2})}

\newstmt{\CHashop}{CHash}
\newcommand{\CHash}[2]{\ensuremath({\CHashop}\;{#1}\;{#2})}

\newcommand{\samenessop}{\ensuremath \approx_{sm} \xspace}
\newcommand{\sameness}[2]{{#1} \approx_{sm} {#2}\xspace}

\newcommand{\sto}{\ensuremath{\sigma}\xspace}

\DeclareMathOperator{\elementary}{elementary}

\DeclareMathOperator{\trace}{trace}
\DeclareMathOperator{\map}{map}
\DeclareMathOperator{\mapR}{map_R}

\newrm{\Some}{Some}


\newcommand{\tlop}{\ensuremath{\beta}\xspace}
\newcommand{\tl}[2]{(\tlop \; {#1}\; {#2})}

\newcommand{\completion}[1]{\widehat {#1}}


\newcommand{\locnew}{\ensuremath{\loc_{\text{new}}}}

 \lstset{
   basicstyle=\ttfamily,
   breaklines  = true,
  }

\lstset{escapeinside={(*@}{@*)}}

\lstset{numbers=left,xleftmargin=3em,frame=single,framexleftmargin=2em}


\newcommand{\lil}{\lstinline}


\newcommand{\D}{\ensuremath{\mathcal{D}\xspace}}


\notabenetrue

\fancypagestyle{title}{%
  \fancyhf{}%
  \fancyhead[L]{Approved for Public Release; Distribution
    Unlimited. Case Number 23-3747.}%
  \fancyfoot[C]{\footnotesize This technical data was developed using
    contract funds under Basic Contract No.  W56KGU-22-F-0017. The
    view, opinions, and/or findings contained in this report are those
    of The MITRE Corporation and should not be construed as an
    official Government position, policy, or decision, unless
    designated by other documentation.  Copyright {(c)} 2023 The MITRE
    Corporation. ALL RIGHTS RESERVED.}%
}

\begin{document}
\maketitle
\ifpubrel
  \thispagestyle{title}
\fi
\begin{abstract}
  Molly is a program that compiles cryptographic protocol roles
  written in a high-level notation into straight-line programs in an
  intermediate-level imperative language, suitable for implementation
  in a conventional programming language.

  We define a denotational semantics for protocol roles based on an
  axiomatization of the runtime.  A notable feature of our approach is
  that we assume that encryption is randomized.  Thus, at the runtime
  level we treat encryption as a relation rather than a function.

  Molly is written in Coq, and generates a machine-checked proof that
  the procedure it constructs is correct with respect to the runtime
  semantics.  Using Coq's extraction mechanism, one can build an
  efficient functional program for compilation.
\end{abstract}

\newpage
\tableofcontents
\newpage


\section{Introduction}
\label{sec:intro-prose}



Protocol narrations, colloquially, ``Alice \& Bob'' notation, are a
nearly ubiquitous informal means to describe the intended executions
of cryptographic protocols as sequences of cryptographic message
exchanges among the protocol’s participants.  This notation is
succinct and approachable but it lacks a formal semantics and leaves
many implementation details implicit.

We are interested here in bridging the gap between informal
description and implementation.  To that end we present \molly, a
compiler that transforms a narration in the \cpsa\ input language into
an implementation.   

{\molly} generates executable code in an intermediate language to
handle the transmissions and receptions dictated by the protocol, as
well as any checks that must be done before taking one of these
steps.  %
For example if the protocol specifies reception of a message $m$, and
at runtime the network delivers the value $v_m$, the code must
\begin{enumerate}
\item deconstruct $v_m$ to verify that it has the structure
  required in the protocol specification, e.g.~that it is a properly
  formed encryption using the expected key, and its plaintext is a
  tuple of components of the right kinds, and so on;

\item do a check that whenever some part of $m$ is to be
  equal to some part of another message already processed, the
  corresponding runtime values are indeed equal,

\item store $v_m$ and its components so that they can be used in
  the remainder of the execution as required.
\end{enumerate}
In the dual case, when a message must be prepared and transmitted, the
 code must arrange to build up a suitably structured
value from values already known, together with new values that may be
chosen randomly, such as nonces and session keys.

In addition, we define a denotational semantics for protocol roles
given in the \cpsa\ language, defined in terms of ``transcripts'',
runtime traces of messages as bitstrings.  Our intermediate language
will be seen to have an obvious transcript semantics, and our main
theorem is a correctness theorem relating role transcripts and
transcripts for our intermediate language.

{\molly} is written in Coq's Gallina programming language, and our
correctness result is developed and checked within Coq.  %
Our Coq development is available at %
\url{github.com/dandougherty/Molly}

\subsection
{Outline and Contributions}
\label{sec:contibutions}

There were a number of key goals that shaped our work reported here.  

\emph{Code Generation for an Intermediate Language.}  Functionally,
{\molly} is a partial function from \roles to \emph{procs}; the latter are
essentially the Procs of Ramsdell's Roletran
\cite{ramsdell2021cryptographic}. %
A proc is a sequence of intermediate-level instructions.  Procs are
straightforward to translate into concrete executable programming
languages, but also easy to characterize semantically.  Hence, they
are an attractive target language for {\molly} as a verified compiler.
{\molly} is written in Coq's Gallina specification language, so that 
using Coq's extraction mechanism, we derive code, in Ocaml
for instance,  to produce a \proc from a role.\footnote{%
It is straightforward to translate \procs into code for a conventional
language with a library for cryptographic primitives such as C, Rust,
or Java; we have not pursued this at this point in the project.}

\emph{Transcripts for Roles.} %
The crucial first step in proving correctness of this compilation is
defining the notion of \emph{transcript for a role:} a transcript is a
sequence of runtime values which might arise as a runtime execution of
the role %
(Section~\ref{sec:transcripts-role}). %
Transcripts live in a domain---bitstrings---independent of
symbolic messages.  It seems suitable to say that role transcripts
yield a \emph{denotational semantics} for protocol roles: the meaning
of a role is the set of its transcripts.  Put another way, the meaning
of a role is its set of \emph{observable actions}.

The development of this idea is somewhat subtle, chiefly because we
take randomized encryption seriously.  We feel that our development of
transcripts gives useful perspective on protocols even outside the
context of compiling, and we count it as one of our main
contributions.  We also define the notion of \emph{transcript for a
  proc} (\refsec{sec:transcripts-proc}): this definition is utterly
straightforward.

Take note of the distinction between a semantics for individual roles,
treated here, and a semantics for \emph{protocol executions}, which of
course comprise interactions between individual role executions.
The former, our transcripts, are merely ``sections'' of the latter, the
behaviors observable by a single principal.    

We want to stress the point that without a semantics for roles and
\procs defined independently from each other, it doesn’t even make
sense to claim that a compilation is correct, much less prove such a claim

\emph{Treatment of Randomized Encryption.} Manipulating symbolic terms
is useful only if we maintain a realistic view of the relationship
between terms and actual network messages consisting of bitstrings. %
Because many cryptographic operations are randomized, the bitstrings
are not in one-to-one correlation with the symbolic terms---even given
an assignment of bitstrings to the variables occurring in the terms.
In particular calling an encryption or digital signature primitive on
the same message twice, with the same key, does not yield the same
bitstring result. 

This has the semantic consequence that role transcripts are generated
from \emph{relations} from values to bitstrings (the ``valuations'' of
Section~\ref{sec:valuations}).  In turn, our compiler embodies choices
about which bitstring should be used as the representative of a
symbolic term when more than one is available, and our correctness
proof characterizes the constraints on these strategies.

\emph{Axiomatizing the Runtime.} Our correctness theorem is about
transcripts; transcripts are sequences of runtime values, so we
require some analysis of the runtime.  A key feature of our approach
is that we do not \emph{define} a runtime for our proof, rather we
simply isolate surprisingly few mild assumptions about the runtime we
require for the proof. %
This strategy has the obvious benefits of broadening the applicability
of the result and of identifying the principles that make compilation
correct.

\emph{A Machine-Checked Correctness Theorem.}
Our main theorem, the Reflecting Transcripts theorem
(\refthm{thm:reflecting-transcripts}), %
states that if role \rl is compiled to procedure \pr, any transcript
for \pr is a transcript for \rl.  As explained in
Section~\ref{sec:preserving-transcripts}, %
the converse of Reflecting Transcripts theorem cannot be expected to
hold in the presence of randomized encryption.

\emph{Proof-Theoretic View of Code Generation.}  We have organized our
compilation according to the Dolev-Yao model~\cite{DolevYao83}, by
which we mean that we view the principal executing a role as
manipulating term-structured items according to derivation rules.
Whatever may be said about the strengths and weaknesses of the
Dolev-Yao model as a model of the adversary, this is certainly a
reasonable model of the regular, compliant protocol participants; the
protocol designer does not expect them to have to succeed at
cryptanalysis, for instance, to run the protocol successfully.  Thus,
reflecting much previous work, dating back at least to Paulson and
Marrero et al.~in the 1990s~\cite{Paulson97,ClarkeJhaMarrero98}, we
take a proof-theoretic view of the actions of our compiler.  It emits
code by executing steps that we formalize as inference rules, thus
generating derivations in a Gentzen-Prawitz natural deduction
calculus~\cite{Gentzen35,Prawitz65}.  Details can be found in
Section~\ref{sec:derivability}.

\emph{Compilability and  Executability.}
Several authors have given definitions of ``executability'' of a role,
intending to capture the informal notion of a role providing enough
information to be able to run (notably,  having decryption keys
available when required).
Existing definitions tend to be incomparable across formalisms.
The results of Section~\ref{sec:derivability} allow us to characterize 
the input roles on which compilation \emph{fails}.
And this in turn allows us to motivate and define a notion of
\emph{executability} for a role in terms of the well-known notion of Dolev-Yao
derivability.

\subsection{Related Work}
\label{sec:rel-work}

There is growing interest in the development of verified compilers,
for conventional languages (\eg, \cite{Leroy-Compcert-CACM}) as well
as domain-specific languages (\eg, \cite{pohjola2022kalas}).  To our
knowledge {\molly} is the first compiler for cryptographic protocols with
a machine-checked correctness condition.

Many authors have worked to bridge the gap between protocol narrations
in a semi-formal style and protocol descriptions that are more formal.
We organize the discussion below according a crude partition.  One
category is work that translates protocol narrations to
another---formally defined---language, like process calculus or
multiset rewriting. Typically the payoff is that automated
verification tools can then be used to reason about the protocol.
Another category is tools that compile protocol descriptions into a
conventional programming language.  Often the input to these tools is
already in a formal notation such as spi-calculus, and sometimes the
translation is instrumented with tools that support claims about the
security guarantees of the target program.

Work in the first category includes %
translation of protocol narration %
to CSP \cite{lowe1997casper}, %
to generic intermediate languages 
\cite{denker2000capsl} \cite{modersheim2009algebraic},
\cite{almousa2015alice}, %
to Multiset Rewriting 
\cite{jacquemard2000compiling}
\cite{keller2014converting}, %
to symbolic representations of principals' knowledge
\cite{mccarthy2008cryptographic}
\cite{basin2015alice} %
(annotated with unifiability conditions in
\cite{chevalier2010compiling}), %
and to Pi-calculus and variants 
\cite{caleiro2005deconstructing} 
\cite{caleiro2006semantics} 
\cite{briais2007formal}.   %

%

In most of the works above the authors offer their work as supporting
an ``operational semantics'' for roles, and here we can identify an
interesting difference between our work and theirs.
In the works above we can identify two broad operational semantics
approaches. %
In one case the job is to define the \emph{activities that an agent
  takes} to implement the protocol: constructing and deconstructing
messages, certain checks, etc. 
In the other case one defines the possible \emph{executions of the
  protocol:}  sometimes a notion of trace is defined, tracking the
evolution of symbolic representation of principal's knowledge,
(Cremers \cite{cremers2005operational} is a detailed development of
this perspective) or alternatively the semantics is implicit in the
semantics of the target formalism (pi-calculus, multiset rewriting,
etc).

Our work cuts across these two functions. %
The sequence of activities that an agent takes to implement a given
protocol role is precisely the \proc built by our compilation. %
And, our transcripts capture the executions of the protocol not in
terms of symbolic terms, but rather in terms of bitstrings.

We will see that our procs support an obvious notion of transcript.
Then as noted above, since roles and \procs have a common target domain for
their semantics it makes sense to compare the meaning of a role and
the meaning of a \proc.    Prior work offers no formal proof of
correctness of the translation process;
our main contribution is
a machine-checked proof of a theorem doing just that.

An intriguing aspect of the work in Caleiro, Vigano, and Basin
\cite{caleiro2006semantics} is that they employ a notion of
incremental symbolic runs as a basis for a \emph{denotational}
semantics.  Each state of a run reflects the information known by the
principals; the run itself models how information grows.  The rules
for evolution of these runs look very much like our saturation process
in \refsec{sec:saturation} for building the bindings of a \proc!  The
difference is that for them the process of recording the growth of
principals' knowledge \emph{is} a semantics of a role, while for us
this process is the essence of \emph{compiling}, not execution.  As
explained earlier, our denotational semantics is grounded in the world
of bitstrings.

Arquint et al \cite{arquint2022sound}, \cite{Arquint:2023aa} are
mainly interested in verification, and present a tool that is not
really a protocol compiler: it starts with a Tamarin model and
generates a set of \emph{I/O specifications} in separation logic.  But
their main correctness result has an interesting relationship to ours.
An I/O specification is a set of permissions needed to execute an I/O
operation \cite{penninckx2015sound}. %
Then (quoting \cite{arquint2022sound}) ``traces can intuitively be
seen as the sequences of I/O permissions consumed by possible
executions of the programs that satisfy it.''  They prove that if
abstract Tamarin model $M$ is translated to a set $S$ of I/O
specifications, then any concrete implementation satisfying $S$
refines $M$ in terms of trace inclusion.  This is closely analogous
to our Reflecting Transcripts theorem
\ref{thm:reflecting-transcripts}, with logical properties standing in
for bitstrings.

We now turn to the category of projects that are primarily
focused on generating implementations from protocol specifications.
Although the work in \cite{almousa2015alice} is not principally
focused in this way, that paper reports a translation from its
intermediate language SPS into JavaScript.

Tobler and Hutchinson \cite{tobler2005generating} built the Spi2Java
tool, which builds a Java code implementation of a protocol specified
in a variation of the Spi calculus.  There is no proof that the
semantics of the input specification is preserved by the translation.

Backes, Busenius, and Hritcu %
\cite{backes2012development} %
developed Expi2Java, which translates models written in the Spi
calculus \cite{abadi1997calculus} into Java.  They formalized their
translation algorithm in Coq and proved that the generated programs
are well-typed if the original models are well-typed.

Modesti %
\cite{modesti2014efficient,modesti2016anbx} developed the ``AnBx''
compiler, which generates Java code from protocols written in an
\AnB-style notation.  The tool
generates certain consistency checks and annotates the translation
with applied pi-calculus expressions to permit a ProVerif
\cite{DBLP:journals/ftsec/Blanchet16}
verification
that security goals are met by the Java code.

The JavaSPI tool of Sisto, Copet, Avalle and Bronte %
\cite{sisto2018formally} %
starts with code in a fragment of the
language that corresponds to applied pi-calculus
\cite{abadi2017applied}. %
The tool can symbolically execute this code in the Java debugger,
formally verify it using ProVerif, 
eventually refine to an Java implementation of the protocol.  They
prove that a simulation relation relates the Java refined
implementation to the symbolic model verified by ProVerif.

Spi2Java, Expi2Java and JavaSPI require the user to provide input in a more
demanding formalism than the familiar \AnB-style.
A benefit of all of the systems in this category compared with {\molly} is
the fact that they produce code for the ubiquitous Java platform.
On the other hand, for none of these systems is there a proof of correctness of the compilations
themselves.

Ramsdell's Roletran compiler
  \cite{ramsdell2021cryptographic}
has functionality and overall goals quite close to that of {\molly}.   The input 
is a \cpsa\ specification of a role, and the output is a program in an
intermediate language designed to be readily translated to a
conventional language:   our input and output languages are inessential
variations on Roletran's.   %
The main
correctness claim of Roletran is that ``the procedure produced by
Roletran is faithful to [the] strand space semantics [of the input
role].''
Roletran does not have a machine-checked  proof of its global correctness
claim, but the distribution does the following  interesting thing.
Coq scripts are provided that can check, for a given role \rl, that
the procedure generated by the tool is correct (according to the
symbolic-trace semantics).

Differences between {\molly} and Roletran include the facts that Roletran's
strand space semantics is in terms of \emph{symbolic traces} for a
role, as opposed to our role semantics based on
bitstrings, and that we provide a machine-checked proof of a uniform
correctness theorem.  Roletran has some restrictions on the messages
that can appear in roles compared with {\molly}.

Roletran is also the inspiration for a fully usable framework called
\emph{Zappa} that augments a role compiler with many ingredients
needed for a runtime system, including runtime message formats based
on ASN.1 encodings.  The Zappa compiler generates procedures in the
Rust programming language for a substantial extension of the source
syntax considered here and in Roletran.  Correctness proofs have not
been considered for the extensions.
Cryptographic libraries
available in Rust may be linked in.   

\subsection{Road map}

We introduce the main ideas of compilation, and the notation for roles
and \procs, through a series of examples in Section~\ref{sec:examples}.
Section~\ref{sec:preliminaries} gives preliminary definitions, and %
Section~\ref{sec:runtime} lists our assumptions about the runtime. %
Section~\ref{sec:compilation} explains the compiler at a high level; %
section~\ref{sec:saturation} presents the details of the algorithm. %
Section~\ref{sec:structure-procs} presents some results about the \procs
constructed by the compiler.
Section~\ref{sec:transcripts} defines our formal models of execution of a role and of a
\proc, and %
section~\ref{sec:relating} proves our main theorem, the Reflecting Transcripts theorem.

\section{Overview and Examples}
\label{sec:examples}

The input to {\molly} is a description of one role of a cryptographic
protocol.  %
There are a variety of notations to specify protocols; here we use
the input language of the protocol analysis tool \cpsa. %
The output of {\molly} is a program in an intermediate language which is
readily translatable to a program in a language such as C or Rust. %
A program in this intermediate language will be called a \emph{proc}.
Roles and \procs are defined precisely in Sections~\ref{sec:terms} and
\ref{sec:procs}.

Our execution model is that \procs read and write runtime values on
channels, maintain a local store of runtime values, and execute
commands that update their state.

To provide a little more detail into the way a role specification is
translated to  \proc code, and to introduce some of the
subtleties that arise in this translation, we discuss a series
of small examples.


\begin{example}   \label{eg:init1}
  Here is a simple role.  %
  It takes three parameters, a channel and two atoms of sort Data. %
  It sends over the channel an ordered  pair constructed from the data,
  and subsequently expects to receive one of the data items over the
  same channel.
  \begin{align*}
    & \Prm {\Ch 1}; \\
    & \Prm {\Dt 1}; \\
    & \Prm {\Dt 2}; \\
    & \Snd {\Ch 1} {\Prr {\Dt 1} {\Dt 2}} ; \\
    & \Rcv {\Ch 1} {\Dt 2}  
  \end{align*}    
  Below is a sequence of instructions in our \proc language to execute this role.
  (Line numbers are added to the display of the \proc here for
  facilitate commentary.)

  For each parameter, a location is allocated, the appropriate
  value is stored; and a runtime check is emitted to ensure that the
  parameter value has the expected sort 
  (for example, lines \ref{line:iptstart}--\ref{line:iptend} for the first parameter).

  For the transmission:  we  create a location to store this pair-value
  (line~\ref{line:bindpr}) and send the value in this location over channel 1
  (line~\ref{line:snd}).

  For the reception: the reception of data from channel 1 and
  implicit declaration of the location for the data are recorded
  (line~\ref{line:recv}),  the data stored in that location 
  (line~\ref{line:bind}), and a check is done
  that the incoming value is the same as the expected one (line~\ref{line:same}).

  \begin{lstlisting}[frame=single,backgroundcolor=\color{yellow!10},
    caption={for \refeg{eg:init1}}]
    (* first parameter *)
    Evnt (Prm (L 1));            (*@ \label{line:iptstart} @*)
    Bind (Ch 1, L 1) (Param 1);
    Csrt (L 1) Chan;              (*@ \label{line:iptend} @*)
    (* second parameter *)
    Evnt (Prm (L 2));
    Bind (Dt 1, L 2) (Param 2);
    Csrt (L 2) Data;
    (* third parameter *)
    Evnt (Prm (L 3));
    Bind (Dt 2, L 3) (Param 3);
    Csrt (L 3) Data;
    (* the send action *)
    Bind (Pr (Dt 1) (Dt 2), L 4) (Pair (L 2) (L 3));  (*@ \label{line:bindpr} @*)
    Evnt (Snd (L 1) (L 4));      (*@ \label{line:snd} @*)
    (* the receive action *)
    Evnt (Rcv (L 1) (L 5));      (*@ \label{line:recv} @*)
    Bind (Dt 2, L 5) (Read 1);   (*@ \label{line:bind} @*)
    Same (L 5) (L 3)             (*@ \label{line:same} @*)
  \end{lstlisting}
\qed \end{example}

\begin{example}
  \label{eg:resp1}

  If we view role init1 as initiating a session, the following is a role
  that could serve as responder.
  
  It takes one parameter, the channel.  The two parameters from role
  init1 are now (expected to be) components of the single value
  received by resp1.  Assuming the reception is accepted, the second
  component is sent back on channel 2.
  \begin{align*}
    & \Prm {\Ch 1}; \\
    & \Rcv {\Ch 1} {\Prr {\Dt 1} {\Dt 2}}; \\
    & \Snd {\Ch 1} {\Dt 2} 
  \end{align*}
  Below is the \proc generated by the compiler.
 The parameter $\Ch{1}$  is processed just as before.
  The reception of $(\Prr {\Dt 1} {\Dt 2}$ begins with the binding of
  the receive runtime value to location \lil{L 2}.  %
  We next generate code that will check whether we have received a
  suitable value: indeed the actual value received might not be
  a pair at all, or it might be a pair of values not of sort Data.
  
  We deconstruct the received value by binding locations to the result of
  operators \Frstop and \Scndop (lines~\ref{line:frst} and
  \ref{line:scnd}). %
  Crucially, these operators can fail at runtime:  the expression \Frstop
  fails if given a non-pair as input, and otherwise returns the first
  component of the pair; similarly for \Scndop.

  Once those deconstructions are done we expect to be left with atomic
  values; we emit the appropriate sort-check statements; these will of course
   fail on input of the wrong sort.

  For the \Scndop transaction: since the value to be sent will already
  available in location 4, we simply emit the $\Sndop$ event statement.

  \begin{lstlisting}[frame=single,backgroundcolor=\color{yellow!10},
    caption=for \refeg{eg:resp1}
    ]
    Evnt (Prm (L 1));
    Bind (Ch 2, L 1) (Param 1);
    Csrt (L 1) Chan;

    Evnt (Rcv (L 1) (L 2));
    Bind (Pr (Dt 1) (Dt 2), L 2) (Read 1);
    Bind (Dt 1, L 3) (Frst (L 2)); (*@ \label{line:frst} @*)
    Bind (Dt 2, L 4) (Scnd (L 2)); (*@ \label{line:scnd} @*)
    Csrt (L 4) Data;    
    Csrt (L 3) Data;

    Evnt (Snd (L 1) (L 4))
  \end{lstlisting}

\qed \end{example}

\begin{example}  \label{eg:gen-hash}
  \hfill
  \begin{align*}
    & \Prm {\Ch 1}; \\
    & \Snd {\Ch 1} {\Dt 1}; \\
    & \Rcv {\Ch 1} {\Hs {\Dt 1}}
  \end{align*}  
  There are  two new things to talk about here.

  First, we want to send a value corresponding to $(\Dt 1)$, which is
  a value we do not have stored in advance. %
  Our \procs have an operator $\Genrop$ that produces new data of a specified
  sort (line~\ref{line:gen}). %
  In our \proc language this operator also takes a natural number
  parameter, to achieve referential transparency:  different occurrences
  of a $\Genrop$ expression always denote the same value.   When \procs
  are translated to stateful languages like C or Rust, this parameter
   might disappear.

  As a programming design decision, {\molly} emits all the necessary
  $\Genrop$ statements at the start of the code generation process.
  Thus the code for  generation of $\Dt 1$ appears even before the
  processing of parameters.

  Second, we need to ensure that the value received for the %
  ${(\Hs {\Dt 1})}$ term really is the hash of the value received for
  the ${(\Dt 1)}$.  But $\Hsop$ has  no analogue to $\Frstop$ or
  ${\Scndop}$ in our \proc language of course: it is computationally infeasible to
  destruct the hash operator.  Instead, we rely on the fact that we do
  have the value for %
  ${(\Dt 1)}$ stored, the one we generated for the transmission. %
  We compute the hash of \emph{that} value (line~\ref{line:hashI}) %
  and compare that to the value
  received (line \ref{line:same-check}).

  \begin{lstlisting}[frame=single,backgroundcolor=\color{yellow!10},
    caption={for \refeg{eg:gen-hash}}]
    (* allocation for fresh value *)
    Bind (Dt 1, L 1)  (Genr 1 Data); (*@ \label{line:gen} @*)
    
    (* first parameter *)
    Evnt (Prm (L 3));
    Bind (Ch 1, L 3) (Param 1);
    Csrt (L 3) Chan;

    (* transmission *)
    Evnt (Snd (L 3) (L 1));

    (* reception *)
    Evnt (Rcv (L 3) (L 4));
    Bind (Hs (Dt 1), L 4) (Read 1);   
    Bind (Hs (Dt 1), L 2) (Hash (L 1)); (*@ \label{line:hashI} @*)
    Csame (L 4) (L 2) (*@ \label{line:same-check} @*)
  \end{lstlisting}
\qed \end{example}

\begin{example} \label{badhash}
  \label{eg:badhash}

  Here is a role that our compiler will decline to compile.
  \begin{align*}
    & \Prm {\Ch 1};  \\
    & \Rcv {\Ch 1} {\Hs {\Dt 1}}; \\
    & \Snd {\Ch 1} {\Dt 1}    
  \end{align*}
  Since our \proc does not have any stored value corresponding to %
  ${\Dt 1}$ at the time of the reception, we cannot emit any code
  to verify at runtime that the received value is a correct hash.
  It is presumably possible to check that the value received is a hash
  of \emph{something,} so it might be tempting to call that something %
  ``${\Dt 1}$'' but we
  certainly could not use the term 
  ${\Dt 1}$ elsewhere in the role since we have no information about
  its value.
\qed \end{example}

In Example \ref{badhash} one might be tempted to
  expect that {\molly} would use $\Genrop$ to generate a location bound to
  ${\Dt 1}$ initially, so that we could use the same strategy as
  there (computing the hash of the known value and comparing equality).
  But the operator ${\Genrop}$ is only performed on locations
  corresponding to terms that \emph{originate} in the role, that is,
  terms whose first occurrence in the role is part of a transmission.
  The notion of origination is a central concept in the analysis of
  protocols by \cpsa.  

\paragraph{Public-Key Encryption}

\begin{example}
  \label{eg:encr1}
  \hfill
  \begin{align*}
    & \Prm {\Ch 1}; \Prm {\Ik {\Av 2}}; \\
    & \Rcv {\Ch 1} {\En {\Nm 0} {\Ak {\Av 2}}}; \\
    & \Snd {\Ch 1} {\Nm 0}    
  \end{align*}
  The term ${\Ak {\Av 2}}$ %
  denotes an asymmetric key generated from the variable $\Av 2$.  %
  By convention this denotes the public key associated with this
  variable.  The term $\Ik {\Av 2}$ denotes the corresponding private
  key; these two terms make a \emph{key pair}.

  The fact that $\Ik {\Av 2}$ %
  is a parameter reflects the idea that this key is known to the principal
  executing this role; the role can decrypt  an incoming messages encrypted with the
  key-partner of this term.    This is precisely what happens in line
  \ref{line:decr}:
  the encryption is stored in location $(L 3)$ and  the decryption key 
  is expected to be stored in location $(L 2)$, we allocate a new
  location $(L 4)$ and store the result of the decryption there
  (line~\ref{line:decr}).

  It is important to note that just as with \Frstop %
  and \Scndop, a \Decrop operation can fail.
  That is, if the runtime values appearing at a reception do not fit
  the specification embodied in the role,  execution will halt.





 \begin{lstlisting}[frame=single,backgroundcolor=\color{yellow!10},
    caption={for \refeg{eg:encr1}}]
    Evnt (Prm (L 1));
    Bind (Ch 1, L 1) (Param 1);
    Csrt (L 1) Chan; 

    Evnt (Prm (L 2));
    Bind (Ik (Av 2), L 2) (Param 2);
    Csrt (L 2) Ikey; 

    Evnt (Rcv (L 1) (L 3));
    Bind (En (Nm 0) (Ak (Av 2)), L 3) (Read 1);
    Bind (Nm 0, L 4) (Decr (L 3) (L 2));  (*@ \label{line:decr} @*)
    Csrt (L 4) Name; 

    Evnt (Snd (L 1) (L 4))
\end{lstlisting}

\qed \end{example}

\begin{example}[Encryption 2]
  \label{eg:encr-fail}   \hfill

  The following role will be rejected by the compiler. 
  \begin{align*}
    & \Prm {\Ch 1}; \\
    & \Rcv {\Ch 1} {\En {\Nm 0} {\Ak {\Av 2}}}; \\
    & \Snd {\Ch 1} {\Nm 0}
  \end{align*}
  There is no way to decrypt the received message, since the key
  partner of $\Ak {\Av 2}$ is not a parameter nor can it be
  constructed from the data available at the time of the reception.
  \hfill
\qed \end{example}

\paragraph{Treating Randomized Encryption}

\begin{example}
  \label{eg:encr-sym}   
  As a simple first example for symmetric encryption consider a role
  that expects an encryption whose key is the hash of a known value and
  replies with the decrypted plaintext.
  \begin{align*}
    & \Prm {\Ch 1};  \Prm{\Dt 2}; \\
    & \Rcv {\Ch 1} {\En {\Nm 0} {\Hs {\Dt 2}}}; \\
    & \Snd {\Ch 1} {\Nm 0}    
  \end{align*}
  A suitable \proc follows.  For readability we introduce some
  whitespace in the listing to make it easier to focus on the
  treatment of encryption.

  Here %
  \begin{itemize}
  \item \lil{(L 1)} stores the value of the channel, and 
    \lil{(L 2)} stores the value of $\Dt 2$
  \item In line~\ref{line:make-hash} we construct the hash of $\Dt 2$
    in preparation for constructing and storing it in \lil{(L 3)}
  \item the received value is stored in \lil{(L 4)}.
  \item In line~\ref{line:decr2} we decrypt the received value with the
    key we constructed in line~\ref{line:make-hash}.  If this decryption
    succeeds (because the reception was indeed encrypted with the value
    of $\Hs {\Dt 2}$ we constructed) we store the result in %
    \lil{(L 5)} and subsequently send it.
  \end{itemize}
  \begin{lstlisting}[frame=single,backgroundcolor=\color{yellow!10},
    caption={for \refeg{eg:encr-sym}}]
    Comm "input (Ch 1)";
    Evnt (Prm (L 1));
    Bind (Ch 1, L 1) (Param 1); 
    Csrt (L 1) Chan;
    Comm "input (Dt 2)"; 
    Evnt (Prm (L 2));
    Bind (Dt 2, L 2) (Param 2);

    Bind (Hs (Dt 2), L 3) (Hash (L 2)); (*@ \label{line:make-hash} @*)

    Csrt (L 2) Data;
    Comm "receiving (En (Nm 0) (Hs (Dt 2)) ) on (Ch 1)";
    Evnt (Rcv (L 1) (L 4));

    Bind (En (Nm 0) (Hs (Dt 2)), L 4) (Read 1);  (*@ \label{line:rcv-encr} @*)
    Bind (Nm 0, L 5) (Decr (L 4) (L 3)); (*@ \label{line:decr2} @*)

    Csrt (L 5) Name; 
    Comm "sending (Nm 0) on (Ch 1)";
    Evnt (Snd (L 1) (L 5))
  \end{lstlisting}
\qed \end{example}

In the symbolic model messages are represented as elements of a free
algebra for an equational theory, the \emph{message algebra}.  
In particular encryption is named by a term %
$\En{m}{k}$.  %

But we assume that our runtime implements \emph{randomized} symmetric
encryption.\footnote{ We do not attempt to \emph{model} randomized
  encryption, that is, we do not formalize any mechanisms in the
  runtime or in the term algebra that reflect computational processes
  of encryption and decryption that use randomization.}

Thus we have a mismatch between the traditional algebraic semantics of
term algebras and the runtime semantics.

For example, if a term %
$\En{m}{k}$ occurs more than once in a protocol description it may be
considered to denote different runtime values at different
occurrences.

So the problem arises of how to define a semantics for
the message algebra that is faithful to runtime semantics.
(Say that we aren't going to try to model the probabilistic aspect,
just the possibilistic aspects.)

By the way, these considerations arise prior to the idea of writing a
compiler: they are just about the establishing the meanings of
cryptographic terms.

Here is a series of examples that highlight some ways in which
randomized encryption poses interesting design choices for our compiler.

\begin{example}[A simple failure]
  \label{eg:fail-1}
  The following role will be rejected by the compiler, 
  for the same reason as in
  \refeg{eg:encr-fail}
  \begin{align*}
    & \Prm {\Ch 1}; \\
    & \Rcv {\Ch 1} {\En {\Dt 3} {\En {\Dt 1} {\Dt 2} }}    
  \end{align*}
  There is have no way to decrypt the received message, since we
  cannot construct %
  $\En {\Dt 1} {\Dt 2}$ 
  as a decryption key from the data available.
\qed \end{example}

Compare the next example with \refeg{eg:encr-sym}.

\begin{example}\hfill
  \label{eg:fail-2}
  \begin{align*}
    & \Prm {\Ch 1}; \Prm {\Dt 1}; \Prm {\Dt 2}; \\
    & \Rcv {\Ch 1} {\En {\Dt 3} {\En {\Dt 1} {\Dt 2} }}     
  \end{align*}
  This role can be compiled, since  we can  construct %
  $\En {\Dt 1} {\Dt 2}$.
  But the resulting code has negligible probability
  of executing successfully: the runtime value of
  $\En {\Dt 1} {\Dt 2}$ 
  we construct from the local data 
  (to be used as decryption key) is unlikely to be the runtime value of
  $\En {\Dt 1} {\Dt 2}$ 
  that was 
  used by a peer as the encryption key to form
  $\En {\Dt 3} {\En {\Dt 1} {\Dt 2}}$.

\qed \end{example}

Another example, similar to the previous but with yet another subtlety.

\begin{example} \hfill
  \label{eg:fail-3}
  \begin{align*}
    & \Prm {\Ch 1}; \Prm {\Dt 1}; \Prm {\Dt 2}; \\
    & \Snd {\Ch 1} {\En {\Dt 1} {\Dt 2} } ; \\
    & \Rcv {\Ch 1} {\En {\Dt 3} {\En {\Dt 1} {\Dt 2} }}    
  \end{align*}
  This role can be compiled, since  we can (and did!) construct %
    $\En {\Dt 1} {\Dt 2}$ 

  Under  a strictly mathematical  analysis the likelihood of successful
  execution here is the same as in the previous example.
  But it is plausible to imagine that the received message would
  instantiated as the same
  runtime value for
  $\En {\Dt 1} {\Dt 2}$ 
  as was sent in the second message,
  for example if this were the  conventional understanding of the
  protocol.   But the protocol specification language provides no way to
  express this convention.
\qed \end{example}

Section~\ref{sec:executability} explains why this phenomenon is, in a
sense, inevitable.


\section{Preliminaries}
\label{sec:preliminaries}

\subsection{Actions}
\label{sec:actions}       

 The \Actop parameterized data type is a useful device for tying
 together several recurring constructions.
 The constructor \Prmop builds parameters, %
 \Retop builds return values, %
 and \Rcvop and \Sndop builds values received and sent.
 
 We will shortly define  roles, procs, and runtime with respective
 carriers Terms, Locations and runtime values.
 Expressions of type $\Act{\Term}$ will denote symbolic terms as
 parameters, sent messages, etc; %
 expressions of type $\Act{\Loc}$ will denote locations where 
 parameters, sent messages, etc are stored;
 expressions of type $\Act{\rtval}$ will denote runtime values used as 
 parameters, sent messages, etc.

 Having a polymorphic datatype to refer to corresponding construction
 allows a uniform treatment of important correspondences.

\begin{definition}
  If $X$ is a Type, $Act\ X$ is the type whose constructors are
  \begin{align*}
    \Prmop &: X \to Act \ X  \\
    \Retop &: X \to Act \ X  \\
    \Rcvop &: X \to X \to Act \ X \\ 
    \Sndop &: X \to X \to Act \ X \\
  \end{align*}
\end{definition}
 
\subsubsection{Mapping}
\label{sec:mapping}

Map a function or a relation over \Actop.

\begin{definition}[Mapping over \Actop] 
  Let $r: X \to \pow{Y}$ be a relation from $X$ to $Y$. %
  The relation
  $\actmap{r}$ from $ \Act{X}$ to  $\Act{Y}$ is the natural extension of $r$
  to $\Act{X}$:
  \begin{align*}
    \actmap{r}( \Prm{x}\; \Prm{y} ) &\text{ holds if } (r \; x \; y) 
    \\
    \actmap{r}( \Ret{x}\; \Ret{y} ) &\text{ holds if } (r \; x \; y) 
    \\
    \actmap{r}( \Rcv{x_1}{x_2}\; \Rcv{y_1}{y_2} ) 
                                    &\text{ holds if } 
                                      (r \; x_1 \; y_1) \text{ and } (r \; x_2 \; y_2)
    \\
    \actmap{r}( \Snd{x_1}{x_2}\; \Snd{y_1}{y_2} ) 
                                    &\text{ holds if } 
                                      (r \; x_1 \; y_1) \text{ and } (r \; x_2 \; y_2)
  \end{align*}
\end{definition}

As a special case (modulo notation) we observe that %
if $f: X \to Y$ then we have the function
$\actmap{f} : \Act{X} \to \Act{Y}$ with
\begin{align*}
  \actmap{f}(\Prm{x}) &= \Prm{(f x)} \\
  \actmap{f}(\Ret{x}) &= \Ret{(f x)} \\
  \actmap{f}(\Rcv{x_1}{x_2}) &= \Rcv{(f x_1)}{ (f x_2)} \\
  \actmap{f}(\Snd{x_1}{x_2}) &= \Snd{(f x_1)}{ (f x_2)}
\end{align*}

\begin{definition}[Mapping over a list]
  \label{def:list-map}
  Let $r$ be a relation from $X$ to $Y$. %
  The relation
  $\mapR {r}$ is the relation from lists of $X$ to lists of $Y$
  defined by:
  \begin{align*}
    \map {r} \; [x_1, \dots x_n] \; [y_1, \dots y_n] 
    &\text{ if } 
      \forall \; 1 \leq i \leq n, \; (r\ x_i\ y_i)
  \end{align*}

  Of course when $r$ is a function $\mapR$ is the standard ``$\map$'' operation
  mapping a
  function over a list.
\end{definition}

\begin{lemma} \label{lem:act-map-monotone}
  If $r$ and $r'$ are relations from $X$ to $Y$
  with $r \subseteq r'$ then for all lists $xs$ and $ys$, 
  $( \mapR {r} \; xs ys)$
  implies   $(\mapR {r'} \; xs \ ys).$
\end{lemma}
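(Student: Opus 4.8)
The plan is to argue directly from the definition of $\mapR$ (Definition~\ref{def:list-map}), pushing the hypothesis $r \subseteq r'$ down to the level of individual pairs. Recall that $\mapR{r}\;xs\;ys$ holds exactly when $xs$ and $ys$ have a common length $n$, say $xs = [x_1,\dots,x_n]$ and $ys = [y_1,\dots,y_n]$, with $(r\ x_i\ y_i)$ for every $1 \le i \le n$. So first I would unfold the assumption $\mapR{r}\;xs\;ys$ to extract the equal-length condition together with the pointwise relatedness $(r\ x_i\ y_i)$. The inclusion $r \subseteq r'$ means precisely that $(r\ x\ y)$ implies $(r'\ x\ y)$ for all $x,y$; applying this at each index yields $(r'\ x_i\ y_i)$ for every $i$. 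Since the length condition is untouched, the defining clause for $\mapR{r'}$ is satisfied by the very same pair of lists, giving $\mapR{r'}\;xs\;ys$.

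In the Coq development, where $\mapR$ is presumably given by the usual inductive presentation (the empty lists are related; a $\mathrel{::}$-list is related to a $\mathrel{::}$-list when the heads are related by the underlying relation and the tails are related recursively), I would instead proceed by structural induction on $xs$ with $ys$ universally generalized. The base case is immediate, since the two empty lists are related under any relation. In the step case $xs = x :: xs'$, inverting $\mapR{r}\;xs\;ys$ forces $ys = y :: ys'$ with $(r\ x\ y)$ and $\mapR{r}\;xs'\;ys'$; I then apply $r \subseteq r'$ to the head pair and the induction hypothesis to the tails, and reassemble the two results with the cons rule for $\mapR{r'}$.

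I do not expect a genuine obstacle: the statement is a routine monotonicity fact, and the only point that needs any attention is keeping the two lists at the same length. Both formulations handle this automatically---the explicit-index form through the shared $n$, and the inductive form through the shape of $ys$ being forced in the step case. Accordingly I anticipate the mechanized proof to be a short induction discharged by inversion followed by a direct appeal to the inclusion hypothesis.
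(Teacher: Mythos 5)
Your argument is correct and is exactly the routine unfolding the paper has in mind --- its own proof is just the word ``easy,'' and your pointwise application of $r \subseteq r'$ under the shared length condition (or equivalently the structural induction with inversion) is the obvious way to discharge it. Nothing further is needed.
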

\begin{proof}
  easy.
\end{proof}

\begin{notation}
  If $r_{XY}$ is a relation from $X$ to $Y$ and 
  $r_{YZ}$ is a relation from $Y$ to $Z$ then
  $(r_{XY} ; r_{YZ})$ denotes  the relational composition of 
  $r_{XY}$ with $r_{YZ}$, a relation from $X$ to $Z$.
\end{notation}

\begin{lemma} \label{lem:act-map-composition}
  If $r_{XY}$ is a relation from $X$ to $Y$ and 
  $r_{YZ}$ is a relation from $Y$ to $Z$ then
  \[
    \actmap {(r_{XY} ; r_{YZ}) } = 
    \actmap{r_{XY}}  \ ; \    \actmap{r_{YZ}}
  \]
\end{lemma}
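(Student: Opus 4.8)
The statement is an equality of two relations on $\Act{X} \times \Act{Z}$, so the plan is to prove it by mutual inclusion: fix arbitrary $a : \Act{X}$ and $c : \Act{Z}$ and show that $\actmap{(r_{XY} ; r_{YZ})}(a, c)$ holds if and only if $(\actmap{r_{XY}} \ ; \ \actmap{r_{YZ}})(a, c)$ holds. The proof then proceeds by case analysis on the constructor of $a$, of which there are four. The crucial structural fact to invoke throughout is that $\actmap{\cdot}$ only relates actions built with the \emph{same} constructor; consequently, once $a$ is fixed to a given shape, both $c$ and (in the composition) the intermediate action must share that shape. This is what makes the case analysis clean rather than requiring one to consider mismatched pairs.

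For the unary constructors $\Prmop$ and $\Retop$, each side collapses to the definition of relational composition. Taking $a = \Prm{x}$ as representative: the left-hand side forces $c = \Prm{z}$ and holds exactly when $(r_{XY} ; r_{YZ})(x, z)$, i.e.\ $\exists y.\ r_{XY}(x, y) \wedge r_{YZ}(y, z)$. On the right-hand side, $\actmap{r_{XY}}(a, b)$ forces $b = \Prm{y}$ with $r_{XY}(x, y)$, and then $\actmap{r_{YZ}}(\Prm{y}, c)$ forces $c = \Prm{z}$ with $r_{YZ}(y, z)$; so the composition holds precisely under the same condition. The two are thus literally the same formula, and the $\Retop$ case is identical.

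The binary constructors $\Rcvop$ and $\Sndop$ carry the only real content, and it is purely bookkeeping: I must check that a \emph{single} intermediate action packages the \emph{two} componentwise intermediate witnesses. Taking $a = \Rcv{x_1}{x_2}$, the forward direction starts from $(r_{XY} ; r_{YZ})(x_1, z_1)$ and $(r_{XY} ; r_{YZ})(x_2, z_2)$, extracts witnesses $y_1$ and $y_2$ from the two composition facts, and assembles the intermediate action $b = \Rcv{y_1}{y_2}$, which then satisfies both $\actmap{r_{XY}}(a, b)$ and $\actmap{r_{YZ}}(b, c)$. The reverse direction runs this backwards: it destructs the given intermediate $b$, which the shape constraint forces to be of the form $\Rcv{y_1}{y_2}$, and recovers $y_1, y_2$ as the witnesses for the two componentwise compositions. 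The $\Sndop$ case is word-for-word the same.

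No step presents a genuine obstacle; the entire argument is an unfolding of the definitions of $\actmap{\cdot}$ and of relational composition. The only point that demands care is the assembly/disassembly of the intermediate action in the binary cases, where one must observe that existentially quantifying over an intermediate $b : \Act{Y}$ of shape $\Rcv{y_1}{y_2}$ is equivalent to existentially quantifying over the pair $(y_1, y_2)$ independently in the two components. In the Coq development this is discharged by \texttt{destruct} on $a$ (and on the witnessing $b$) followed by the definitions, so I expect the mechanized proof to be short.
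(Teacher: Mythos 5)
Your proof is correct and is exactly the routine unfolding the paper has in mind: the paper's own proof is just the single word ``easy,'' and your case analysis on the four constructors of $\Actop$, with the observation that the binary cases reduce to packaging the two componentwise witnesses into one intermediate action, is the standard expansion of that remark.
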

\begin{proof}
  easy.
\end{proof}

\subsection{Sorts}
\label{sec:sorts}
Symbolic terms, \proc expressions, and runtime values obey a common
sort discipline.
\begin{definition}
The  \emph{sorts} are
\begin{align*}
 & \Chan \qquad
 \Data \qquad
 \Name\qquad
 \Text \qquad
 \Skey \qquad
 \Akey \qquad
 \Ikey \\
 & \Mesg 
\end{align*}
The \emph{base} sorts are the sorts other than \Mesg.
\end{definition}

\subsection{Terms and Roles}
\label{sec:terms}

We begin with a set of atoms, each of which has a sort. %
We close under the operations of %
pairing, encryption, hashing, and quotation.

It is convenient to define raw symmetric and asymmetric keys as auxilliary
notions, which will be wrapped to form terms.
  \begin{align*}
   \skeys &\eqdef \{ \Sv{n}  \mid n \in \nat \}
    \\
    \akeys &\eqdef \{ \Av{n} \mid n \in \nat \}
  \end{align*}
  
  The set of terms is defined inductively by the following
  constructors.
  \begin{align*}
    \Chop &: \nat \to \Term &\text{channels} \\
    \Txop &: \nat \to \Term &\text{text } \\
    \Dtop &: \nat \to \Term &\text{data } \\
    \Nmop &: \nat \to \Term &\text{names } \\
    \Skop &: \skeys \to \Term &\text{symmetric keys } \\
    \Akop &: \akeys \to \Term &\text{asymmetric keys } \\
    \Ikop &: \akeys \to \Term &\text{asymmetric keys } \\
    \Qtop &: \strings \to \Term &\text{quotation } \\
    \Prrop &:  \Term \to \Term \to \Term &\text{pairs } \\
    \Enop &:  \Term \to \Term \to \Term &\text{encryptions } \\
    \Hsop &: \Term \to \Term &\text{hashes } \\
    \Mgop &: \nat \to \Term &\text{generic variable} 
  \end{align*}

The \emph{elementary} terms  are the terms \emph{other than} those
whose top-level constructor is one of %
\Prrop, \Enop, \Hsop, or \Qtop.   

A \emph{symbolic key pair} is an ordered pair consisting of a private key and
a public key (in that order) which are inverses for asymmetric
encryption.    Syntactically a symbolic key pair takes the form 
\[ ( \Ik {\Av {n}} , \Ak {\Av {n}} )
\]
Either of the two parts uniquely determines the other,
as evidenced by the symbolic syntax, but  when terms are interpreted
by runtime values, the public can be feasibly computed from the
private part, but not vice-versa.

\subsubsection{Sorts for Terms}
\label{sec:term-sort}

\begin{definition}  
 Each term is assigned a sort.
\begin{align*}
  \sortof \Ch{x} &= \Chan &
  \sortof \Tx{x} = \Text \\
  \sortof \Dt{x} &= \Data &
  \sortof \Nm{x} = \Name \\
  \sortof \Sk{x} &= \Skey &
  \sortof \Ak{x} = \Akey \\
  \sortof \Ik{x} &= \Ikey &
  \text{otherwise: } \sortof t = \Mesg 
\end{align*}
\end{definition}

\subsubsection{Term Inverse}
\label{sec:term-inverse}
The inverse function $\inv{t}$ on algebraic terms $t$ is not named by
a constructor, but it is definable.
\begin{definition}
\label{def:rt-inverse}
The \emph{inverse} of a term $t$, denoted $\inv{t}$ is defined as follows.
\begin{itemize}
\item If $(t_1, t_2)$  is a symbolic key pair then 
  $\inv{t_1} = t_2 $ and    $\inv{t_2} = t_1 $ 
\item $\inv{t} = t$ for all $t$ not part of a key pair
\end{itemize}  
\end{definition}

\subsubsection{Roles}
\label{sec:roles}
A role of a protocol will specify the parameters, the messages
to be sent and received, and the outputs.   Our \roles are essentially
the roles of \cpsa, though in \cpsa\ a role
description also defines which terms are to be assumed to be ``uniquely
originating.''   But this information---crucial to
\emph{analysis}---is not relevant to compiling.   If we omit this
aspect of \cpsa\ role specifications  the
constructors of the type \Actop capture everything we need about a
role, leading to the following simple definition.
\begin{definition}[Role]
  A \emph{\role} is a list of $\Act{\Term}$.
\end{definition}

\subsection{Procs}
\label{sec:procs}

Procs are an intermediate language for representing straight-line
cryptographic programs.    It will be clear that a \proc can be readily
translated---with the help of a suitable cryptographic library---into
a conventional imperative program.


We start with  a set \textbf{Loc} of $\emph{locations}$ for storing runtime
values.

\paragraph*{Expressions}
An  \emph{Expression} is built from locations using certain
  operators that mirror the runtime operators presented  in
  \refsec{sec:runtime}.
  \begin{align*}
    \Pair{\loc_1}{\loc_2}  &: \; \text{pairing values} \\
    \Frst{\loc}     & : \; \text{first projection out of a pair} \\
    \Scnd{\loc}      & : \; \text{second projection out of a pair} \\
    \Encr{\loc_1}{\loc_2}  & : \; \text{encryption} \\
    \Decr{\loc_1}{\loc_2}   & : \; \text{decryption} \\
    \Hash{\loc}    & : \; \text{hashing a value } \\
    \Quot{s}   & : \; \text{a string constant} \\
    \PubOf{\loc}   & : \; \text{public-key partner of a private key} \\
    \Genr{n}{srt} & : \; \text{the $n$th value generated (to be of sort
                    $srt$)} \\
    \Param{n}   & : \; \text{the $n$th input parameter}\\
    \Read{n}   & : \; \text{the $n$th value read from any channel}
  \end{align*}

  \paragraph{Statements}
  A \emph{\proc} is a sequence of \emph{statements.}
  Each statement is an \emph{Event}, a \emph{Bind}, or a \emph{Check}.
  \begin{enumerate}
  \item An {Event} is one of the following four forms
    \begin{enumerate}
    \item    
    $ \Prm{\loc}$ : an input parameter is stored in location \loc

  \item   
      $\Ret{\loc}$ : the value in location \loc is output

    \item $\Rcv{\loc_1}{\loc_2}$ : %
      when the value in $\loc_1$ is a channel, the value received on
      $\loc_1$ is stored in location $\loc_2$
    \item     
      $\Snd{\loc_1}{\loc_2}$ :  
      when the value in $\loc_1$ is a channel, the value stored
      $\loc_2$ is sent on location $\loc_1$
    \end{enumerate}

  \item A {Bind} is of the form
    \[\Bind{t}{\loc}{e}\]
    where
    $t$ is a symbolic term, %
    $\loc$ is a location, and
    $e$ is an {expression}

    This is an assignment statement, storing the value named by $e$
    into the location \loc;    the symbolic term $t$ serves as a type for the
    location $v$.
    
  \item A {Check} is one of the following statement forms.   A check
    is an assertion: if it succeeds, computation simply continues, and
    if it fails, computation halts.
    \begin{enumerate}
     
    \item  $\CSrt{\loc}{s}$ :
       checks that {the value in $\loc_1$  has the sort $s$ } 

     \item  $\CSame{\loc_1}{\loc_2}$ :
        checks that {the values in the given locations are the same} 

     \item  $\CKypr {\loc_1}{\loc_2}$ :
       checks that {the values in the two locs make a private/public
        runtime key pair}

    \item  $\CHash{\loc_1}{\loc_{2}}$ :
      checks that {the value in $\loc_2$ is the hash of the value in
        $\loc_1$} 

    \item   $\CQot{\loc}{s}$ :
      checks that {the value in $\loc$ is the string $s$ }
    \end{enumerate}
  \end{enumerate}




\section{Axiomatizing the Runtime}
\label{sec:runtime}

\molly does not read or generate runtime expressions, but 
the semantics of roles and procs is built on runtime values.
Here we record
our assumptions about  the runtime as an axiomatic theory \rtThy. %
The theorems we prove about \molly will hold
about any implementation of  the runtime
below which qualifies as a model of \rtThy.

\subsection{The Runtime Operators}\label{sec:runtime-operators}

\begin{notation} \hfill
  \begin{itemize}

\item We indicate that a function is partial by writing
its return type as a lifted type (for example the return type of the
operator ${\rtfrst}$ below).

\item When writing about equalities between expressions involving partial
functions in this document we use the following convention:
\[ e_1 \opteq e_2
\]
asserts that both $e_1$ and $e_2$ denote and their values are equal.
\end{itemize}
\end{notation}

\begin{definition}
  The signature for theory \rtThy\ is
\begin{align*}
  \rtpair &: \rtval \to \rtval \to \rtval  &\text{pairing}\\
  \rtfrst &: \rtval \to \option{\rtval} &\text{first projection}\\
  \rtscnd &: \rtval \to \option{\rtval} &\text{second projection}\\
  \rtencr &: \rtval \to \rtval \to \rtval \to \bool
                                           &\text{encryption}\\
  \rtdecr &: \rtval \to \rtval \to \option{\rtval} &\text{decryption}\\
  \rthash &: \rtval \to \rtval  &\text{hashing}\\
  \rtquote &: \strings \to \rtval &\text{quotation} \\
  \rtpubof &: \rtval \to \option{\rtval}  &\text{public key partner
                                            for private key}\\
  \rtgen &: \nat \to \Sort \to \rtval  &\text{value generation}\\
  \rtsrt &: \rtval \to \Sort &\text{check sort}\\
\end{align*}
\end{definition}
We have not included here any operators for processing parameters to a
role or reading values from a channel since we don't analyze these processes.

\subsubsection{Key Pairs and Runtime Inverse}
\label{sec:key-pairs-runtime}

An ordered pair $(r_1, r_2)$ is a \emph{key pair} if it comprises the
private and public parts of an asymmetric key, that is, if 
$\rtpubof \; r_1 = r_2$.  We use the name
$\rtkypr$ for this defined relation:
\begin{align*}
\rtkypr\; r_1 \; r_2 
\quad \text{if and only if} \quad
\rtpubof \; r_1 = r_2 .
\end{align*}

The following relation $\rtinvop$ is not a runtime primitive, it is a
definable relation convenient for analysis.

\begin{definition}
 The \emph{runtime inverse} $\rtinvop$ is another definable relation,
given by
\begin{align*}
  \rtinv{r_1}{r_2} \quad &\text{if $(r_1,r_2)$ or $(r_2, r_1)$  make a
                       runtime key pair} \\
  \rtinv{r}{r} \quad &\text{if $r$ is not of sort \Akey or \Ikey}
\end{align*}
\end{definition}
It is easy to see that the relation \rtinvop is actually a function:
for each $r$ there is a unique $r'$ such that
$\rtinv{r}{r' }$.    But this function is not feasibly
implementable, under the assumption that one cannot feasibly compute
the private part of a key pair from the public part.  So we prefer to
make our core definitions and axioms below in terms of \rtinvop as a
relation: it is clearly \emph{is} feasibly implementable under the
assumption that \rtpubof is.

\subsection{The Axioms}\label{sec:axioms-about-runtime}

\paragraph{Pairing Axiom}
The operations \rtfrst and \rtscnd are the usual projections
characterizing pairs.
  \begin{align}
    \rtpair\ r_1 \ r_2 = r 
    &\; \leftrightarrow \; \rtfrst r \opteq  r_1 \land \rtscnd r \opteq r_2 
      \label{eq:pair-project}
  \end{align}

\paragraph{Axiom about \rtgen}
The \rtgen operation delivers values of the
appropriate sorts.
\begin{align} 
  \rtsrt (\rtgen \ n \ srt) &= srt  \label{eq:gen-sort} 
\end{align}

\paragraph{Axioms about \rtpubof}  
The \rtpubof partial function makes a bijection from sort \Ikey to the sort
\Akey (and is undefined off of the sort \Ikey).
\begin{align} 
\label{eq:pub-of-sorts}
  \rtpubof r_1 =  r_2 &\to
                        \sortof(r_1) = \Ikey \land
                        \sortof(r_2) = \Akey    
  \\
  \label{eq:bijection1}
  \sortof(r_1) = \Ikey &\to
                         \exists ! \,r_2,
                         \rtpubof r_1 =  r_2
  \\
  \label{eq:bijection2}
  \sortof(r_2) = \Akey &\to
                         \exists ! \,r_1,
                         \rtpubof r_1 = r_2
\end{align}

\paragraph{Encryption Axiom}
The 
standard relationship between encryption and decryption is of course
\[
 \rtencr\ r_p \ r_{ke} \ r_e \ 
    \leftrightarrow \rtdecr \ r_e \ \inv{(r_{ke})} = r_p 
\]
when expressed using the runtime inverse function $\inv{(-)}$.
Since the inverse function is not feasibly computable we prefer
the following formulation in 
terms of \rtinvop.
\begin{align}
  \label{eq:encr-decr}
  \rtencr\ r_p \ r_{ke} \ r_e  \land \rtinv{r_{ke}}{r_{kd}}
  &\to \rtdecr \ r_e \ r_{kd} = r_p 
  \\
  \label{eq:decr-encr}
  \rtdecr \ r_e \ r_{kd} = r_p \land \rtinv{r_{ke}}{r_{kd}}
  &\to
  \rtencr\ r_p \ r_{ke} \ r_e  
\end{align}

\paragraph{Summary: the Runtime Theory}

\begin{definition}
  The  theory \rtThy\ comprises the axioms %
  (\ref{eq:pair-project}),
  (\ref{eq:gen-sort}), 
  (\ref{eq:pub-of-sorts}),
  (\ref{eq:bijection1}), 
  (\ref{eq:bijection2}), 
  (\ref{eq:encr-decr}), and
  (\ref{eq:decr-encr}) .
\end{definition}

In the Coq code for \molly we use a typeclass to capture the theory \rtThy.

\section{Compilation}
\label{sec:compilation}

Here we outline the structure of the compilation process.

The notion of ``saturation'' of a \proc plays a key role, and we
present a  careful discussion of saturation \refsec{sec:saturation}.
But
for an intuition, imagine generating code to process a reception of a
term $\Prr{\Dt 1}{\Dt 2}$.
We will bind $\Prr{\Dt 1}{\Dt 2}$ to a new location $\loc$, but will
also want to generate code that serves to ensure that an incoming
value at runtime is of the right form.    We do this bind emitting
code for a bonding of $t_1$ to another location $\loc_1$ with the
constraint that $\loc_1$ is equal to $\Frst{\loc}$.
That (plus the corresponding process for $t_2$) serves to check that
and incoming value really is a pair, since otherwise the $\Frst{\loc}$
(or the $\Scnd{\loc}$ will fail at runtime).
Then we need code to ensure that the value corresponding to $t_1$
really is of type $\Data$ \dots and so forth.

Saturation is the process of emitting all the bindings and checks
required for a \proc to be closed under these obligations.

\subsection{The Main Loop}
\label{sec:main-loop}

The overall structure of the compiler is simple. %
Given an input role \rl, we
\begin{enumerate}
\item initialize a \proc \pr with generated values, as explained below
\item then loop through the actions of the role:
  \begin{enumerate}
  \item if the current action is an \Prmop or a reception \Rcvop of a term $t$ we %
    add an Event statement to \pr recording the input or reception; %
    then add a statement binding $t$ to a new location;
    then saturate \pr

  \item if the current action is an \Retop or a \Sndop of a term $t$ we %
    add an Event statement to \pr recording the output or transmission; %
    then saturate \pr
  \end{enumerate}
\item since saturation can fail, the return type of each of 
the above steps returns, and of the compilation as a whole, is
$\option{\proc}$.
\end{enumerate}

The state of the compilation at any point is a record with the
following data
\begin{itemize}
\item the role to be compiled
\item the current \proc 
\item a list $\mathit{done}$ of the role actions treated so far
\item a list $\mathit{todo}$ of the role actions yet to be treated
\end{itemize}

\subsubsection{Invariants}
\label{sec:invariants}
To express our invariants we first note that 
the Bind statements of any \proc  naturally build a relation \tlop from terms
to locations:
\begin{definition} \label{def:tl}
  The relation $\tlop$ from terms to locations is defined as
  \[
  \tl{t}{\loc} \text{ if for some $e$, }
  (\Bind {t}{l} {e}) \text{ is in \pr .}
\]
\end{definition}

Using \tlop we define the following invariants maintained by the
compilation process.

\begin{enumerate}
\item The concatenation of $\mathit{done}$ with $\mathit{todo}$ is the original
  role
\item The relation \tlop systematically relates 
the list of terms $\mathit{done}$ and the trace of \pr.
More precisely we have 
\[ \mapR  \actmap{\tlop} \; \mathit{done} \; \pr
\]
\item  \pr is saturated.
\end{enumerate}

These invariants lead to the following properties of a successful
compilation returning a \proc \pr.

\begin{itemize}
\item From invariants 1 and 2:  the role \rl and the trace of \pr are related as
\[ \mapR  \actmap{\tlop} \; \rl \; \trace{\pr}
\]
\item \pr is saturated
\end{itemize}

Those closure properties are key for ensuring our core correctness
property, the Reflecting Transcripts property.
In \refsec{sec:saturation} we explain those properties and the
mechanisms to ensure them; in
\refsec{sec:relating} we present the proof of 
the Reflecting Transcripts theorem.

\subsection{Initialization}
\label{sec:generation}

Note the difference in character between the term $\Dt{3}$ and the
terms $\Dt{1}$ and $\Dt{2}$ in the role below.
  \begin{align*}
    & ... \\
    &\Rcv {\Ch 1} {\En {\Dt 1} {\Hs {\Dt 2}}} ;\\
    &... \\
    &\Snd {\Ch 1} {\Dt 3}\\
    & ...
  \end{align*}
  Assuming there are no occurrences of $\Dt 3$ prior to the one shown,
  the term $\Dt{3}$ is to be freely chosen by the agent executing the
  role. 

  We assume that the runtime has a mechanism for constructing values of
  a given sort.   Correspondingly our \procs have an expression \Genrop
  intended to be implemented by this mechanism.

  We next explain how the compiler emits the necessary statements.
  Key pairs require some special treatment, which we explain after
  giving the routine case.

\subsubsection{Generation Bindings}
\label{sec:generation-bindings}
  \begin{definition}[Polarity]
    Let \rl be a role and let $t$ be an elementary term occurring as a
    subterm in \rl.   %
    
    We say that has \emph{negative polarity} in \rl
    if the first event of \rl in which $t$ is a subterm is either a
    \Paramop or a \Rcvop.

    We say that has \emph{positive polarity} in \rl
    if the first event of \rl in which $t$ is a subterm is either a
    \Retop or a \Sndop.
  \end{definition}
  
  For each elementary term $t$ with positive polarity which is not an
  asymmetric key, the compiler emits code binding $t$ to a location
  with a corresponding \Genrop expression. Specifically, we emit the
  statement
    \begin{align}\label{eg:1}
    \Bind{t}{\loc} {\Genr{s}{k}}
  \end{align}
  where $\loc$ is a fresh location, %
  $s$ is the sort of $t$, and %
  $k$ is an natural-number index that identifies which occurrence of
  $t$ is being treated.

There is a subtlety, though, about key pairs for asymmetric
encryption, which we explain next.

\paragraph{Generation and Key Pairs}
\label{sec:generation-key-pairs}

Recall the definition of symbolic key pair from \ref{sec:terms}.
It will be convenient to use the shorthand %
$\pri{n}$ and
$\pub{n}$ for
$\Ik {\Av {n}}$ and
$ \Ak {\Av {n}}$, respectively.

If both \pri{n} and \pub{n} have positive polarity, we should not
generate calls to \Genrop independently for the two terms; this would
lose the constraint that their values should be key pairs.  And if exactly
one of them has positive polarity then any occurrence of its key
partner is to be received later.  The \proc will need code to check
that this reception is suitable, that is, its value must make a
runtime key pair with the value of the positive-polarity original
term.  This suggests that the symbolic key \emph{partner} of the term
with positive polarity must be associated with a binding.

The following process implements these ideas.

\begin{quote}
  \emph{If either \pri{n} or \pub{n} has positive polarity:} 
  add the two bindings
  \begin{align} 
    \label{eq:bind-pri}
    \Bind{\pri{n}}{\loc_{pri}} {\Genr{\Ikey}{ k}} 
    \\
    \label{eg:bind-pub}
    \Bind{\pub{n}}{\loc_{pub}} {\PubOf{\loc_{pri}}}
  \end{align}
  where $\loc_{pri}$ 
  and  $\loc_{pub}$ are fresh locations and $k$ is an index.
\end{quote}

If neither \pri{n} nor \pub{n} has positive polarity,  nothing
  happens for these terms during initialization: saturation will
  ensure that \pr have appropriate \CKyprop checks to ensure that
  locations associated with these terms make a key pair.

Now, the semantics of \procs will ensure that the values associated with
$\loc_{pri}$ and $\loc_{pub}$ will be runtime key pairs, because
of the $\PubOf{\loc_{pri}}$ expression.
But 
we should also check that when we generate a key pair then our \proc has
other runtime checks that we expect.  For instance let us suppose first
that \pri{n} occurs with positive polarity and that \pub{n} occurs
with negative polarity, say by a binding
\[ \Bind {\pub{n}} {\loc_1} {e} .
\] 
We will want to know that \pr has sufficient checks to ensure that 
 the runtime values associated with
$\loc_{pri}$ and $\loc_{1}$ will make a key pair.

By our initialization process \pr also has
\begin{align}
  \label{eg:3}
  \Bind{\pub{n}}{\loc_{pub}} {\PubOf{\loc_{pri}}}
\end{align}
But as a regular part of our \proc construction we ensure that whenever
two locations are assigned to the same elementary term we emit a 
assertions implying that the two locations have the same value at
runtime.
This is the Check Key Pair Condition of Definition~\ref{def:closed}.
Specifically, in our current scenario we will emit assertions 
implying that $\loc_{pub}$ and $\loc_{1}$ will have the same runtime
values.
This, in concert with the binding~\ref{eg:3},  ensures that  the values associated with
$\loc_{pri}$ and $\loc_{1}$ will be runtime key pairs.

A similar argument shows that if %
 \pub{n} occurs with
positive polarity and that \pri{n} occurs with negative polarity, 
then our \proc will have checks sufficient to ensure that any runtime
values assigned to the corresponding variables will be runtime key pairs.

\subsubsection{Generation and Initialization}
\label{sec:gener-init-}
  
  The current compiler emits all code for generating values at the
  initialization phase.  There is no necessity for doing it eagerly in this way:
  one could recognize generating terms ``on the fly'' when processing
  a send or return and do generation then.  But it is convenient
  to do these Generations in the initialization phase since it makes
  analysis and proof of correctness a bit smoother.

\subsection{The Structure of Expressions}
\label{sec:struct-expr}

Our \proc expressions are not a recursive type per se: expressions are
not arguments to expressions. But the binding
structure in a \proc gives an implicit recursive structure on expressions.  For
example, if $e$ is the expression $\Pair{\loc_1}{\loc_2}$ and our
\proc has bindings $\Bind {t_1}{\loc_1}{e_1}$ and
$\Bind {t_2}{\loc_2}{e_2}$ then it is natural to think of $e$ as being
a pairing of $e_1$ with $e_2$.  

Note there are exactly four expression operators
that do not take locations as arguments:
\[
\Paramop, \; \Readop, \; \Quotop \; \text{and} \; \Genrop
\]
and these take only indices serving to distinguish  occurrences or to
name a sort or a string.

So every expression built by the compiler can be thought of %
(modulo the indirection pointed out above involving locations) %
as a tree built by the other operators, whose leaves are essentially
expressions built from %
simple $\Paramop,$ $\Readop,$ $\Quotop,$ {and} $\Genrop$ expressions.

In Section~\ref{sec:struct-expr-binding} we show how this leads to a
useful perspective on the overall structure of a \proc.

\section{Saturation}
\label{sec:saturation}

In the previous section we described  saturation informally;
here we define it carefully.   First we introduce a convenient
relation on locations.
The \proc statement %
$\CSame{\loc_1}{\loc_2}$ compares the values in 2 given locations.  We
will need to work with the equivalence relation on locations this generates.

\begin{definition}[Sameness] \label{def:sameness}
  The relation $\sameness{\!}{\!}$ is the least equivalence relation on locations such
  that $ \sameness{\loc_1}{\loc_2}$  whenever the statement
  $\CSame{\loc_1}{\loc_2}$ is in \pr.
\end{definition}

\subsection{Saturated Procs}
\label{sec:saturated-procs}

We define two desirable conditions for a \proc: that it be closed and
justified.  To be saturated (Definition~\ref{def:saturation}) is to
satisfy each of these.

\begin{notation} \label{def-notation}
To avoid verbosity we will use, in 
Definitions~\ref{def:closed} and \ref{def:justified},
the convention that writing
\begin{align*}
  & \Bind {t} {\loc}{e} \\
   \text{is shorthand for} \\
  & \Bind {t} {\loc}{e} \text{ is one of the statements in the
  procedure \pr}.
\end{align*}
\end{notation}

\subsubsection{Closure}
\label{sec:closure}

Closure is the process of (i) adding binding  statements to a \proc
in order to reflect the information known about parameters and
messages received or generated and (ii) adding checks to reflect the
constraints---such as sameness---among locations.

We first define the condition of ``being closed'' and then, in 
\refsec{sec:rules}, give an algorithm for achieving closure.

\subsubsection{Motivation for the Closure Conditions}
\label{sec:motiv-clos-cond}
\begin{itemize}
\item Pair Elimination: a \proc satisfying this axiom is guaranteed to
  have  statements
  in deconstructing a received value and storing the values derived
  for future use.
\item Decryption: as for Pair Elimination, except that this axiom only
  ensures that we deconstruct a received encryption when the
  appropriate decryption key is available.

\item Pair, Encryption, and Hash Introduction:
  these axioms ensure that we have statements deriving new values to be used for
  transmission {and} for potential use in constructing decryption
  keys or hash-values when received hashes are to be checked.

\item Check Hash: since there is no operator to deconstruct a hash, we
  can only check the suitability of a value received when 
  a hash is
  expected by comparing it to a known hash of that value.

\item Check Same: ensures that we have code for the assertion that
  different locations bound to the same elementary term do indeed hold
  the same value.

\item Check Sort: similarly to Check Same we need code for the
  assertion that a value received has the expected sort.


\item Check Key Pair: is self-explanatory

\item Check Quote: is self-explanatory
\end{itemize}

We need the following taxonomy on expressions for the definition and
analysis of the closure conditions.   The intuition for these
definitions is that they capture circumstances where we do not need
to do projections (out of pairs) or decryptions (of encryptions)
during saturation.

\begin{definition}
Let \pr be a proc and $e$ and expression.
Then $e$ is
  \begin{itemize}
  \item a \emph{pair expression for} %
    term $\Pr{t_1}{t_2}$ in proc \pr if $e$ is of the form
    $\Pair{l_1}{l_2}$ and there are bindings in \pr of the form
    $\Bind{t_1}{l_1}{e_1}$and
    $\Bind{t_2}{l_2}{e_2}$.

  \item an \emph{encryption expression for} %
    term $\En{t_1}{t_2}$ in proc \pr if $e$ is of the form
    $\Encr{l_1}{l_2}$ and there are bindings in \pr of the form
    $\Bind{t_1}{l_1}{e_1}$and
    $\Bind{t_2}{l_2}{e_2}$.

\end{itemize}
\end{definition}

\begin{definition}[Closed Proc]
  \label{def:closed}
  Let \unv be a set of terms. %
  A \proc \pr is \emph{closed} if it satisfies the universal closures of
  the following formulas..
    \begin{align*}
        {\textbf{Pair Introduction Condition}} \\
        \Bind {t_1}{\loc_1}{e_1} 
        \land\
        \Bind {t_2}{\loc_2}{e_2} 
        \land\
        \Prr{t_1}{t_2}\in \unv
        &\to 
        \\
        \exists \loc,
        \Bind {\Prr{t_1}{t_2}}  {\loc} {\Pair{\loc_1}{\loc_2}} 
        \\
        \\
        {\textbf{Encryption Introduction Condition}} \\
        \Bind {t_1}{\loc_1}{e_1} 
        \land\
        \Bind {t_2}{\loc_2}{e_2} 
        \land\
        \En{t_1}{t_2}\in \unv
        &\to 
        \\
        \exists \loc,
        \Bind {\En{t_1}{t_2}}  {\loc} {\Encr{\loc_1}{\loc_2}} 
        \\ 
        \\
        \textbf{Hash Introduction Condition} \\
        \Bind {t_1}{\loc_1}{e_1} 
        \Hs{t_1} \in \unv
        &\to \\
        \exists \loc,
        \Bind {\Hs{t_1}}  {\loc} {\Hash{\loc_1}} 
        \\
        \\
        \textbf{Pair Elimination Conditions} \\
        \Bind {\Prr{t_1}{t_2}}  {\loc} {e} 
        & \ \land\ \\
        \text{$e$ not a pair expression for $\Pr{t_1}{t_2}$}
        & \to \\
        ( \exists \loc_1,
        \Bind {t_1}{\loc_1}{\Frst{\loc}} \
        \land \\
        \exists \loc_2,
        \Bind {t_2}{\loc_2}{\Scnd{\loc}} \; )
        \\
        \\
      \textbf{Decryption Condition} \\
      \Bind{\En{p}{k}} {\loc} {e}  
      & \ \land\ \\
      \text{$e$ not an encryption expression for $\En{t_1}{t_2}$}
      & \ \land\ \\
      \Bind{\inv{k}}{\loc_1}{e_1} 
        &\to \\
        \Bind{p}{\locnew}{\Decr{\loc}{\loc_1}}
        \\ \\
        \textbf{Check Hash Condition}  \\
        \Bind {\Hs{t}} {\loc_h} {e_h} 
        \land \Bind {t} {\loc_t} {e_t} 
        & \to \\
        \CHash{\loc_h}{\loc_t}
        \\
        \\
        \textbf{Check Equality Condition}  \\
        \elementary\ t
        \land\
        \Bind {t}{\loc_1}{e_1} 
        \land
        \Bind {t}{\loc_2}{e_2} 
        &\to \\
        \sameness{\loc_1}{\loc_2}
        \\
        \\
        \textbf{Check Quote Condition} \\
        \Bind {\Qt{s}}{\loc}{e}  
          &\to \\
          \CQot{\loc}{s}
          \\
          \\
        \textbf{Check Sort Condition}  \\
        \elementary\ t
        \land\
        \Bind {t}{\loc}{e} 
        &\to \\
        \exists \loc_1,
        \sameness{\loc}{\loc_1} 
        \land\
        \CSrt{\loc}{(\sortof t)} 
        \\
        \\
        \textbf{Check Key Pair Condition} \\
        ( t_1 , t_2) \text{ make a symbolic key pair } 
        \land\ \\
        \Bind{t_1}{\loc_1}{e_1}
        \land\
        \Bind{t_2}{\loc_2}{e_2}
        &\to \\ 
        \exists \loc'_1 \; \loc'_2 \; e'_1 \; e'_2 ,
        \Bind{t_1}{\loc'_1}{e'_1} 
        \land\
        \Bind{t_2}{\loc'_2}{e'_2} \
        \land\
        \\
        \sameness{\loc_1}{\loc'_{1}},
        \land\ 
        \sameness{\loc_2}{\loc'_{2}}
        \land\
        \CKypr{\loc'_1}{\loc'_2} \quad
    \end{align*}

\end{definition}

\subsubsection{Being Justified}
\label{sec:being-justified}
  A \proc \pr is ``justified'' if, intuitively
  \begin{itemize}
  \item received encryptions always have decryption keys available,
    and
  \item whenever $\Hs{t}$ is bound in \pr then $t$ is also bound
  \end{itemize}
  Formally (we continue to employ Notation~\ref{def-notation}):

\begin{definition}[Justified Proc]
\label{def:justified}
  A \proc \pr is \emph{justified} if it satisfies
  \begin{align*}
      \textbf{Encryption Justification} \\
      \Bind {\En{p}{k}} {\loc} {e}
      \land\ \text{non-Encryption } e
      &\to \\
      \exists \loc_1, \exists e_1, \Bind {\inv{k}}{\loc_1}{e_1} \;
      \\ \\
      \textbf{Hash Justification} \\
      \Bind {\Hs{t_1}}  {\loc} {e} 
      \land\ \text{non-Hash }\ e
      &\to \\
      \exists \loc_1, \exists e_1, \Bind {t_1}{\loc_1}{e_1} 
  \end{align*}
\end{definition}

Being justified is not a property that we can ensure of the \procs the
compiler builds.  It is ultimately a property of the role we are
compiling: it will fail if the parameters and expected receptions of
the role do not provide the material needed to construct needed
decryption keys or bodies of hashes.

\subsubsection{Saturated}
\label{sec:saturated}

\begin{definition}[Saturated Proc]
\label{def:saturation}
  A \proc \pr is \emph{saturated} with respect to a set of terms \unv
  if it is closed with respect to \unv %
  and is  justified.
\end{definition}

We will be interested in the case where \unv is the set of all
subterms of terms occurring in a role to be compiled.

\subsection{The Saturation Process}
\label{sec:rules}

In this section we present an algorithm for taking a \proc \pr and
returning a saturated extension of \pr (or failing).

\subsubsection{Motivation for the Closure Rules}
\label{sec:motiv-clos-rules}
The closure rules that follow are recipes for transformations that
 take an arbitrary \proc \pr (say, the state of a \proc
immediately after processing an event) and ultimately return a \proc $\pr_{*}$ that
satisfies the conditions of \refdef{def:closed}.  

The rules we present here do precisely this, in the sense that
is a  \proc $\pr_*$ is a fixed point with respect to these rules then 
$\pr_*$ satisfies the 
the conditions of \refdef{def:closed}.  This is the content of %
\refthm{thm:closed-closed}.

For the most part
the axioms of \refdef{def:closed} can be read ``operationally'' in the
sense that they are Horn sentences which can be made true by
augmenting the \proc so that it satisfies the consequent whenever the
antecedent is found to be true.  But there are some small obstacles to
interpreting the conditions of \refdef{def:closed} naively,
specifically (i) 
witnessing the existential quantifiers in the consequents, and (ii) the
logistics of ensuring that certain statements involving global
relations such as $\samenessop$ are satisfied.

The correspondence between 
the rules below  and 
the axiomatic conditions %
Pair Elimination, %
Decryption, %
Pair Introduction, Encryption Introduction, and Hash Introduction %
and is clear: for a given
instance of the antecedent of one of these conditions %
one need only (if necessary) generate a
fresh location to witness the existential quantifier and generate 
appropriate \Bindop statements to establish the consequent.
Thus several of the rules use the reference \locnew : this is a reference to a
variable not occurring elsewhere in the \proc.

For the assertions %
Check Same, Check Sort, Check Hash, Check Quote, and Check Key Pair,
the subtlety is that we want to ensure relationships between locations
based on the sameness relation $\samenessop$ without generating an
excessive number of $\CSameop$ statements.  %
We do this by the trick of identifying and exploiting the ``first
location'' for a term $t$: this will be the least location $l$ such
that a statement $\Bind{t}{l}{e}$ occurs in \pr.
In fact there is nothing special
about choosing the \emph{least} location; all that matters is that
there we identify one distinguished location for each elementary term
bound in \pr.

Once the rules have been identified, we saturate the \proc 
arbitrarily by the rules.
\refsec{sec:receptions} explains why 
we cannot simply do a 
 syntax-directed
application of the inference rules.





  



\subsubsection{The Closure Rules}
\label{sec:closure-rules}

The closure rules are defined in the context of a set \unv of symbolic terms;
in practice  \unv  will be the
set of subterms occurring in the role being compiled.

\begin{definition}[Closure Rules]
  \label{def:rules}

  Fix a universe \unv of symbolic terms. 

  The eleven inference rules for \emph{closing} a \proc \pr are the following.

  \paragraph{Pair Introduction Rule} \hfill

  Here we require that $\Prr{t_1}{t_2}$ be in \unv, and there are no
  bindings for $\Prr{t_1}{t_2}$ in \pr .

  \begin{prooftree}
    \AxiomC {$ \Bind{t_1}{\loc_1}{e_1} $}
    \AxiomC {$ \Bind{t_2}{\loc_2}{e_2} $}
    \BinaryInfC {$ \Bind{\Prr{t_1}{t_2}} {\locnew} {\Pair{\loc_1}{ \loc_2}} $}
  \end{prooftree}

  \paragraph{Encryption Introduction Rule}
  \hfill

  Here we require that %
  $\En{p}{k}$ be in \unv, and there are no  bindings for $\En{p}{k}$ in \pr.

  \begin{prooftree}
    \AxiomC {$ \Bind{p}{\loc_1}{e_1} $}
    \AxiomC {$ \Bind{k}{\loc_2}{e_2} $}
    \BinaryInfC {$ \Bind{\Prr{p}{k}} {\locnew} {\Pair{\loc_1}{ \loc_2}} $}
  \end{prooftree}

  \paragraph{Hash Introduction Rule}
  \hfill

  Here we require that $\Hs{t}$ be in \unv, and there are no  bindings for $\Hs{t}$ in \pr.

  \begin{prooftree}
    \AxiomC {$ \Bind{t}{\loc}{e} $}
    \UnaryInfC {$ \Bind {\Hs{t}} {\locnew} {\Hash{\loc}} $}
  \end{prooftree}

  \paragraph{Pair Elimination Left and Right Rules}
  \hfill

  We apply these rules when
  $e$ not a pair expression for $\Pr{t_1}{t_2}$

  \begin{minipage}{0.5\linewidth}
    \begin{prooftree}
      \AxiomC {$
        \Bind{\Prr{t_1}{t_2}}{\loc}{e}
        $}
      \UnaryInfC {$
        \Bind{t_1}{\locnew}{\Frst{\loc}}
        $}
    \end{prooftree}
  \end{minipage}
  \begin{minipage}{0.5\linewidth}
    \begin{prooftree}
      \AxiomC {$
        \Bind{\Prr{t_1}{t_2}}{\loc}{e}
        $}
      \UnaryInfC {$
        \Bind{t_2}{\locnew}{\Scnd{\loc}}
        $}
    \end{prooftree}
  \end{minipage} \\

  \paragraph{Decryption Rule}
  \hfill

  We apply this rule when
  $e$ not an encryption expression for $\En{p}{k}$
  The active premise is the encryption binding.

  \begin{prooftree}
    \AxiomC {$
      \Bind{\En{p}{k}} {\loc} {e} \quad 
      \Bind{\inv{k}}{\loc_1}{e_1} 
      $}
    \UnaryInfC {$ \Bind{p}{\locnew}{\Decr{\loc}{\loc_1}} $}
  \end{prooftree}

  \paragraph{Check Hash Rule}
  \hfill

  The active premise is $ \Bind{\Hs{t}}{\loc_h}{e_h}$
  \begin{prooftree}
    \AxiomC {$ \Bind{\Hs{t}}{\loc_h}{e_h}$}
    \AxiomC {$ \Bind{{t}}{\loc_t}{e_t}$}
    \BinaryInfC {$\CHash{\loc_h}{\loc_t} $} 
  \end{prooftree}


           %


  \paragraph{Check Quote Rule}
  \hfill

  
\begin{prooftree}
    \AxiomC {$ \Bind{\Qt{s}}{\loc}{e}$}
    \UnaryInfC {$\CQot{\loc}{s} $} 
  \end{prooftree}

  \paragraph{Check Sort Rule}
  \hfill

  We apply this rule when 
  $t$ is an elementary term
  and $\loc_{1}$ is the first location for $t$ in \pr.

  \begin{prooftree}
    \AxiomC {$ \Bind{t}{\loc}{e}$}
    \UnaryInfC {$\CSrt{\loc}{\sortof t} $}
  \end{prooftree}

  \paragraph{Check Same Rule}
  \hfill

  We apply this rule when %
  $t$ is an elementary term, $\loc_{f} < \loc_{1}$, and $\loc_{f}$ is the first
  location for $t$ in \pr.

  The active premise is the binding whose location is $\loc_{1}$.

  \begin{prooftree}
    \AxiomC {$ \Bind{t}{\loc_1}{e_1}$}
    \AxiomC {$ \Bind{t}{\loc_f}{e_f}$}
    \BinaryInfC {$\CSame{\loc_1}{\loc_f} $}
  \end{prooftree}

  \paragraph{Check Key Pair Rule}
  \hfill

  We apply this rule when  $(t_1, t_2)$ makes a symbolic key pair, 
  $\loc_{1}$ is the earliest location for $t_1$, and
  $\loc_{2}$ is the earliest location for $t_2$.
  
  The active premise is the binding whose location is $\loc_{2}$.
  \begin{prooftree}
    \AxiomC {$ \Bind{t_1}{\loc_1}{e_1}$}
    \AxiomC {$ \Bind{t_2}{\loc_2}{e_2}$}
    \BinaryInfC {$\CKypr{\loc_1}{\loc_2} $}
  \end{prooftree}
\end{definition}

\subsubsection{Closure is not Syntax-Directed}
\label{sec:receptions}

As the examples below will demonstrate, when we use the rules to
construct a saturated \proc, we cannot apply them in a naively
syntax-directed way. %
In the course of analyzing a reception we sometimes must use bindings
that we can access but which have not yet themselves been fully
analyzed.

\begin{example}
  Suppose we receive the pair
  \[
    (\En{b}{k}, \; \En{k}{\En{b}{k}})
  \]
  For readability we temporarily revert to ordinary pair notation
  instead of writing
  \[
    \Prr{\En{b}{k}} {\En{k}{\En{b}{k}}}
  \]
  Assuming $b$ and $k$ are elementary, 
  we can successfully analyze this by 
  (i) using %
  the first component of the pair as decryption key for the second
  component, thereby obtaining $k$, then (ii) using 
  $k$ to decrypt the first component. %
  The net result is that will generate bindings for each of 
  \[ \{ \; 
    (\En{b}{k}, \; \En{k}{\En{b}{k}}), 
    (\En{b}{k}), \; 
    \En{k}{\En{b}{k}}, \; {k} , \; {b}
    \; \}
  \]
\qed \end{example}

We can also vary the example so that it does not depend on randomized
operations being used as keys.
\begin{example}
  Suppose we receive the pair
  \[
    (\Hs{(b,k)}, \; \En{(b,k)}{\Hs{(b,k)}})
  \]
  Assuming $b$ and $k$ are elementary, we can successfully analyze this
  by strategy similar to the previous one.
\qed \end{example}
The next example shows we may need to apply
construction (i.e.~introduction) rules in the course generating code
for a reception.
\begin{example}
  Suppose we receive the pair
  \[
    (\Hs{(b,k)}, \; \En{(b,k)}{(\Hs{(b,k)},\Hs{(b,k)})})
  \]
  Assuming $b$ and $k$ are elementary, we can successfully analyze this
  by %
  using the first component to 
  construct the pair
  ${({\Hs{(b,k)}},{\Hs{(b,k)}})}$, so that it can be used as a
  decryption key for $\En{(b,k)}{({\Hs{(b,k)}},{\Hs{(b,k)}})}$.  
\qed \end{example}

So our code proceeds by (after somewhat arbitrarily ordering the
rules) applying the first rule that can fire and continuing until
reaching a fixed point.   We argue termination in \refsec{sec:termination}.

\subsubsection{Termination}
\label{sec:termination}

\begin{theorem}
  \label{thm:termination}
  Let \pr be a \proc and \unv a set of terms. %
  There are no infinite  sequences of saturation rules starting with
  \pr using \unv.
\end{theorem}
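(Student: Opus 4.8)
The plan is to argue by contradiction, showing that any saturation sequence produces \procs of strictly increasing length whose statements are drawn from a \emph{finite} pool, so that no such sequence can be infinite. The starting point is a monotonicity fact underlying everything: each of the eleven rules of \refdef{def:rules} only \emph{adds} a statement to \pr and never deletes one, and — reading ``a rule can fire'' as ``its conclusion is not already present'' — every rule application strictly increases the number of statements in \pr. Hence an infinite saturation sequence would force \pr to contain arbitrarily many statements, and it suffices to exhibit a finite bound $N$, depending only on the starting \pr and on \unv, on the number of statements reachable.

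The bulk of the work is bounding the binding statements. First I would prove a \emph{bindable-terms} lemma by induction on the length of a saturation sequence: every term occurring in a binding of \pr is a subterm of a term in $\unv$ or of a term bound in the initial \pr. Indeed the three introduction rules only bind terms explicitly required to lie in \unv, while Pair Elimination and Decryption bind \emph{proper subterms} (the components $t_1,t_2$ of $\Prr{t_1}{t_2}$, and the plaintext $p$ of $\En{p}{k}$) of terms already bound; the set $T$ of all these subterms is finite. Second I would record a \emph{no-duplicates} observation: no two bindings added by the rules carry the same (term, expression) pair, since Pair, Encryption, and Hash Introduction fire only when \emph{no} binding for the target term exists, and each decomposition rule fires only when no binding with that very expression $\Frst{\loc}$, $\Scnd{\loc}$, or $\Decr{\loc}{\loc_1}$ is present. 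Given these two facts, I would bound, for each $t \in T$, the number $n(t)$ of bindings of $t$ by well-founded induction on the size of $t$: it suffices to count distinct binding expressions, and each $\Frst{\loc}$- or $\Scnd{\loc}$-expression binding $t$ is indexed by a binding of a strictly larger pair $\Prr{t}{\cdot}$ or $\Prr{\cdot}{t}$, already controlled by the induction hypothesis.

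Once the bindings are bounded, every rule-added binding uses a fresh location, so the set of binding locations is finite. The check statements then fall out easily: each check produced by the Check Sort, Check Same, Check Key Pair, Check Hash, and Check Quote rules is determined by a tuple of locations drawn from this finite set, and each such rule fires only when its conclusion is absent, so only finitely many checks can ever be added. Summing the bound on bindings, the bound on checks, and the length of the initial \pr yields the desired finite $N$, and with the monotonicity fact this contradicts the existence of an infinite saturation sequence.

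I expect the single delicate point to be the Decryption case of the binding bound. Unlike Pair Elimination, whose conclusion is a proper subterm derived from a \emph{single} strictly larger source, Decryption combines the strictly larger encryption $\En{t}{k}$ with an auxiliary binding of the key $\inv{k}$, and $\inv{k}$ need not be smaller than $t$ — for symmetric keys $\inv{k}=k$, a sibling of $t$ — so a naive induction on term size does not close, because the factor $n(\inv{k})$ may refer back to terms no smaller than $t$. I would handle this by first bounding the number of bindings of \emph{key terms}, the finitely many terms of the form $\inv{k}$ for encryptions $\En{p}{k}\in T$, and then running the size induction with that bound as a fixed multiplicative parameter; equivalently, I would use a lexicographic measure pairing term size with this global key-binding count. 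The remainder is routine bookkeeping over the finitely many terms of $T$.
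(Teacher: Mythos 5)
Your overall strategy is genuinely different from the paper's: you bound the total number of statements that any saturation sequence could ever produce and combine that with strict monotonicity, whereas the paper exhibits a decreasing measure --- each binding is weighted by the number of rules for which it is still an ``active redex'' times the size of its term, introductions are counted separately (at most $|\unv|$ of them), and every Check or elimination step is shown to strictly decrease the total weight. Both are legitimate termination templates, and your handling of the introduction rules, of Pair Elimination (where the new expression $\Frst{\loc}$ is indexed by the single pair-binding location $\loc$, so the downward-on-size induction closes), and of the check statements is sound.

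The gap is exactly where you placed it, and the repair you sketch does not close it. Under your reading of the firing condition --- Decryption fires whenever the specific conclusion expression $\Decr{\loc}{\loc_1}$ is absent --- the number of decryption bindings of a plaintext $p$ is controlled only by $\sum n(\En{p}{k})\cdot n(\inv{k})$, and ``first bounding the bindings of key terms'' is circular: a key term is an arbitrary term, its bindings arise from the very same rules, and it can itself be the plaintext of another encryption whose key is yet another key term. Concretely, suppose $\En{p}{k}$ and $\En{k}{p}$ are both bound at locations $\loc_a,\loc_b$ with neutral (non-encryption) expressions, $p$ and $k$ are self-inverse, and $k$ has one initial binding. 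Each new binding of $k$ at a fresh location $\loc_i$ makes $\Decr{\loc_a}{\loc_i}$ a new, absent conclusion, so Decryption fires and binds $p$ afresh; that new binding of $p$ licenses a new decryption of $\loc_b$, which binds $k$ afresh; and so on forever. Under your reading the theorem is therefore false outright, so no lexicographic or two-stage refinement of the measure can rescue the argument. What is needed is the convention the paper's proof uses implicitly when it asserts that ``when a rule fires, the active premise is no longer a premise for that rule'': each ciphertext binding is decrypted at most once, i.e.\ Decryption with active premise at $\loc$ is blocked as soon as \emph{some} statement $\Bind{p}{\loc'}{\Decr{\loc}{\loc_1}}$ is present, for any key location $\loc_1$. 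With that convention the multiplicative factor $n(\inv{k})$ disappears, each binding of $\En{p}{k}$ contributes at most one binding of $p$, and your downward-on-size induction closes with no special treatment of keys at all.
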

\begin{proof}
  The three introduction rules apply at most once for each $t \in \unv$,
  and the new bindings they add cannot be premises of any other rule.
  Thus there are at most  $|\unv|$ applications of introduction rules in
  any saturation process.

  So it suffices to argue that there can be only finitely many
  application of Checks and the  elimination rules %
  \textbf{Pair Elimination Left and Right and Decryption}.

  Let us say that a binding %
  $\Bind{t}{v}{e}$ is a \emph{redex} if it is the active premise of a
  rule whose conclusion is not in \pr.

  Note that each binding can be a redex for at most one rule, with two
  exceptions:
  $(\Bind {\Prr{t_1}{t_2}} {v} {e})$ can be a redex for both 
  Pair Elimination Left and Pair Elimination Right, and 
  a binding for an asymmetric key can be a  premise for Check Key Pair
  as well as for (one of) Check Sort or Check Same.

  Let us assign a \emph{weight} to each binding %
  $(\Bind{t}{v}{e})$ in \pr, by %
  (i) counting the number of rules for which it is an active redex and 
  (ii) multiplying this number by the size of $t$.

  For example, if $(\Bind {\Prr{t_1}{t_2}} {v} {e})$ is in \pr and neither
  the conclusion of Pair Elimination Left nor  Pair Elimination Right is
  in \pr then this binding gets weight $2 |\Prr{t_1}{t_2}|$.

  Then we say that the
  \emph{weight} of \pr is the sum of the weights of the bindings in \pr.
  We claim that each elimination or Check rule application decreases
  this weight.

  First: by inspection we see that when a rule fires, the active premise
  is no longer a premise for that rule.  

  Second: when a Check rule fires, the weight of \pr decreases by the
  size of term being bound.    No bindings are added by a Check rule.

  Finally, when Pair Elimination Left or Pair Elimination Right or
  Decryption fires, the size of the term  being bound is subtracted
  from the weight of \pr, and replaced by the weight of some term in a new
  binding. But this new term is smaller than the term in the redex.

  Thus the weight of the \proc decreases at each step, and saturation
  must terminate.
\end{proof}

\subsubsection{Correctness}
\label{sec:correctness}

The rules in \refdef{def:rules} that create new bindings (for example,
Pair Elimination) have an obvious relationship with the corresponding
declarative conditions in \refdef{def:closed}.  %
But the rules and conditions about check statements are more subtly
linked, essentially because the rules only add individual statements
to the \proc yet the condition on the check statements make reference
to more global properties of a \proc, in particular, the $\samenessop$\
relation.
The next theorem verifies that the rules are sufficient to enforce
the conditions we need.

\begin{theorem}
  \label{thm:closed-closed}
  Fix a set of terms \unv. %
  Suppose \pr is closed under the rules of \refdef{def:rules} (with
  \unv as bounding set). %
  Then \pr is closed (with respect to \unv) in the sense of \refdef{def:closed}.
\end{theorem}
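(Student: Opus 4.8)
The plan is to prove the theorem by a direct case analysis over the ten closure conditions of \refdef{def:closed}, showing that each is a logical consequence of \pr being a fixed point under the eleven rules of \refdef{def:rules}. No induction on the structure of \pr is needed; the whole argument rests on a single dichotomy. For each rule, if an instance of its premises holds and its side conditions are met, then fixed-pointness forces the rule's conclusion to already be present in \pr (otherwise the rule could fire and properly enlarge \pr). So for each closure condition I would assume its antecedent, observe that this antecedent supplies exactly the premises and side conditions of the matching rule, and read off the consequent from the fixed-point property.

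For the introduction conditions (Pair, Encryption, Hash) and the elimination conditions (Pair Elimination, Decryption) this is nearly mechanical. Take Pair Elimination: its antecedent includes the clause that $e$ is \emph{not} a pair expression for $\Prr{t_1}{t_2}$, which is precisely the side condition governing the Pair Elimination Left and Right rules; since $\Bind{\Prr{t_1}{t_2}}{\loc}{e}$ is a premise of both, the fixed-point property yields the projection bindings $\Bind{t_1}{\loc_1}{\Frst{\loc}}$ and $\Bind{t_2}{\loc_2}{\Scnd{\loc}}$ demanded by the consequent; the Decryption condition is handled identically using its inverse-key premise. For the introduction conditions one argues contrapositively: given component bindings at $\loc_1,\loc_2$ with the compound term in \unv, if no binding for that compound term were present, the rule's ``no existing binding'' side condition would hold, the rule would fire, and \pr would fail to be a fixed point; hence the required binding exists. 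The one point needing care is that the condition names the specific expression $\Pair{\loc_1}{\loc_2}$ built from the antecedent's component locations, so when several component bindings coexist one must track which locations the firing instance used; I would reconcile this together with the sameness reasoning below.

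The genuinely subtle conditions are the check conditions, exactly as the text warns, because the rules add individual check statements only at distinguished ``first'' or ``earliest'' locations whereas the conditions quantify over all locations and appeal to the global equivalence $\samenessop$. The crux is a Sameness Lemma: if \pr is closed under the Check Same rule, then for every elementary term $t$, any two locations bound to $t$ are related by $\samenessop$. I would prove it by letting $\loc_f$ be the least location bound to $t$; for every other such location $\loc_1$ the Check Same rule (with $\loc_f < \loc_1$ and $\loc_f$ the first location) forces $\CSame{\loc_1}{\loc_f}$ into \pr, so $\sameness{\loc_1}{\loc_f}$ holds, and since $\samenessop$ is the least equivalence relation generated by the $\CSameop$ statements, symmetry and transitivity give $\sameness{\loc_1}{\loc_2}$ for any two such locations. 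This lemma \emph{is} the Check Equality condition, and it feeds the remaining checks: for Check Sort the rule supplies $\CSrt{\loc}{\sortof t}$ at the first location and the lemma relates an arbitrary binding location to it; for Check Key Pair (noting that asymmetric keys are elementary) the rule supplies $\CKypr{\loc_1}{\loc_2}$ at the earliest locations for $t_1,t_2$ and the lemma provides witnesses $\loc_1',\loc_2'$ together with the required links $\sameness{\loc_1}{\loc_1'}$ and $\sameness{\loc_2}{\loc_2'}$. The Check Hash and Check Quote conditions carry no sameness subtlety and follow from their rules by the basic dichotomy.

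I expect the main obstacle to be precisely this reconciliation of the local, first-location-indexed check rules with the global, sameness-quotiented conditions, that is, the statement and correct deployment of the Sameness Lemma; the structural introduction and elimination cases, while numerous, are routine once the dichotomy is isolated. A secondary care point, flagged above, is ensuring that the specific component locations required by the introduction conditions are matched by the locations used in the relevant rule instance.
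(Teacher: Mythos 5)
Your proposal matches the paper's proof in both structure and substance: a condition-by-condition case analysis in which the structural (introduction/elimination/decryption) conditions follow directly from fixed-pointness under the corresponding rules, and the check conditions are reduced to the first-location argument — your ``Sameness Lemma'' is exactly the paper's verification of the Check Equality Condition, which it then reuses for Check Sort and Check Key Pair just as you do. The one place you are arguably more careful than the paper is in flagging that the introduction conditions demand the specific expression $\Pair{\loc_1}{\loc_2}$ built from the antecedent's component locations while the rule fires only once per compound term; the paper's proof passes over this silently, so your proposal is at least as complete as the published argument.
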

\begin{proof} \hfill
  \begin{itemize}
  \item \emph{Verifying the} \textbf{Pair Introduction Condition}

    If %
    $\Bind {t_1}{\loc_1}{e_1}$ and %
    $\Bind {t_2}{\loc_2}{e_2}$ are in \pr %
    and
    $\Prr{t_1}{t_2}$ is in \unv %
    then the Pair Introduction Rule ensures that %
    there exists $\loc$ with
    $\Bind {\Prr{t_1}{t_2}}  {\loc} {\Pair{\loc_1}{\loc_2}}$ in \pr.

  \item \emph{Verifying the} \textbf{Encryption Introduction Condition}

    Just as for Pair Introduction, the Encryption Introduction Rule
    ensures this property directly.

  \item \emph{Verifying the} \emph{Verifying the} \textbf{Hash Introduction Condition}

    The Hash Introduction Rule ensures this property directly.

  \item \emph{Verifying the} \textbf{Pair Elimination Condition}

    The Pair Eliminations Rules ensure this property directly.

  \item \emph{Verifying the} \textbf{Decryption Condition}

    The Decryption Rule ensures this property directly.

  \item \emph{Verifying the} \textbf{Check Hash Condition}

    The Check Hash Rule ensures this property directly.




    
  \item \emph{Verifying the} \textbf{Check Quote Condition}

    The Check Quote Rule ensures this property directly.

  \item \emph{Verifying the} \textbf{Check Equality Condition}

    Suppose $t$ is elementary and %
    $      \Bind {t}{\loc_1}{e_1} $ and
    $      \Bind {t}{\loc_2}{e_2} $ are in \pr.
    
    Let $\loc_0$ be the first location for $t$ in
    \pr. %
    Since \pr is closed under the Check Same Rule,
    either 
    \begin{itemize}
    \item $\loc_0$ is $\loc_1$ and 
      $\CSame{\loc_1}{\loc_2}$ is in \pr, or
    \item $\loc_0$ is $\loc_2$ and 
      $\CSame{\loc_2}{\loc_1}$ in \pr, or
    \item $\loc_0$ is neither $\loc_1$ nor $\loc_2$ and 
      both
      $\CSame{\loc_0}{\loc_1}$  and
      $\CSame{\loc_0}{\loc_2}$ are in \pr
    \end{itemize}
    In all of these cases, %
    $ \sameness{\loc_1}{\loc_2}$.

  \item \emph{Verifying the} \textbf{Check Sort Condition}

    Suppose $t$ is elementary and %
    $ \Bind {t}{\loc}{e} $ is in \pr.
    
    Take $\loc_0$ to be the first location for $t$ in \pr.
    Since \pr is closed under the Check Sort Rule,
    $\CSrt {\loc_0}{(\sortof t)}$ is in \pr.
    By the just-proven fact that \pr satisfies the Check Equality
    Condition, $\sameness {\loc} {\loc_0}$  as desired.

  \item \emph{Verifying the} \textbf{Check Key Pair Condition}

    \sloppy Suppose $(t_1 , t_2)$ makes a symbolic key pair and  %
    $\Bind{t_1}{\loc_1}{e_1}$ and %
    $\Bind{t_2}{\loc_2}{e_2}$ are in \pr. %

    \fussy
    Take $\loc'_1$ and  $\loc'_2$ to be the respective first-locations
    for $t_1$ and $t_2$.
    Then certainly %
    $\sameness{\loc_1}{\loc'_{1}}$ and 
    $\sameness{\loc_2}{\loc'_{2}}$. %
    And    $\CKypr{\loc'_1}{\loc'_2}$ is in \pr by the
    Check Key Pair Rule.
    
  \end{itemize}
\end{proof}

\section{Some Results on Procs}
\label{sec:structure-procs}

\subsection{The Structure of Expressions in Bindings}
\label{sec:struct-expr-binding}

In Section~\ref{sec:struct-expr} we observed that expressions in
bindings can be thought of as the result of flattening of a recursive
type using locations.

We noted that we can view expressions in the presence of \proc
bindings as a tree whose leaves are expressions built from %
simple $\Paramop, \Readop, \Quotop, \text{and} \Genrop$ expressions.

Now we can say more.
The compiler makes initial bindings using $\Paramop$ and $\Readop$ when translating an input
or reception, respectively, and 
makes an initial bindings using
uses $\Pubofop$ and $\Genrop$ when 
initializing or preparing a transmission, respectively.
If we examine the subsequent saturation process 
we can see that 
\begin{itemize}
\item 
the expressions built when processing a parameter or a reception are precisely
those built 
\begin{itemize}
\item starting from \Paramop and \Readop and
\item using the ``destructive''
operators %
\[
    \Frstop \quad
    \Scndop \quad
    \Decrop
\]
\end{itemize}

\item the expressions built when initializing or processing a transmission or a return
  are precisely those built 
  \begin{itemize}
  \item starting from \Genrop, \Quotop, and \Pubofop and 
  \item using the ``constructive''
  operators
\[
    \Pairop \quad
    \Encrop \quad
    \Hashop \quad
\]
\end{itemize}
\end{itemize}

\subsection{Procs and Derivability}
\label{sec:derivability}

In this section we show the intimate connection between the \procs
constructed by {\molly} and Dolev-Yao derivability of symbolic terms.

Roughly speaking, the connection is this: if \pr is a \proc generated
from a role \rl by Initialization and our Closure Rules,  %
then the terms $t$ such that there is a binding
$ \Bind {t} {\loc} {e}$ in \pr are the terms that are Dolev-Yao
derivable from the input parameters and messages received in \rl.  %

Of course the $\Genrop$ operator allows us to bind \emph{any}
elementary term, so we have to exclude such generated terms.
Also the $\Qtop$ and $\Pubofop$ forms are not treated in
traditional Dolev-Yao so we need to enrich the system just slightly.

\begin{definition}[Enhanced Dolev-Yao]
  The  \emph{enhanced Dolev-Yao} inference system comprises the
  traditional 
  Dolev-Yao system with the addition of the following two rules
  \begin{enumerate}
  \item  derive  $\Qt{s}$ for any string $s$ 
  \item derive $\Ak {\Av {n}}$ from $\Ik {\Av {n}}$
  \end{enumerate}
\end{definition}
\begin{definition}[Obtained]
  A term $t$ is \emph{obtained} in role \rl if either 
$\Prm{t}$ is in \rl or $\Rcv{ch}{t}$ is in \rl for some $ch$
\end{definition}

Before proving the next two lemmas we make a simple observation.
The  rules in \refdef{def:rules} that create a \Bindop are %
  Pair Introduction,
  Encryption Introduction,
  Hash Introduction,
  Pair Elimination Left and Right,
  and Decryption.
  For each of these rules, 
  if we suppress everything but
  the symbolic terms in the hypotheses and conclusion, 
  we have an rule in the enhanced Dolev-Yao system.

\begin{proposition}
\label{prop:bind-then-derived}
  Suppose \pr is a \proc generated from a role \rl by
  our Initialization and  Closure Rules.  %
  
  If $ \Bind {t} {\loc} {e}$ is a statement in \pr such that the
  expression $e$ has no occurrence of the \Genrop operator,  %
  then $t$ is derivable in the enhanced Dolev-Yao system from the set of
  terms obtained by \rl.
\end{proposition}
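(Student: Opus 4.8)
The plan is to prove the claim by induction on the recursive structure of the expression $e$, where $e$ is understood not as a flat term but as the tree obtained by expanding it through the binding structure of \pr, as in \refsec{sec:struct-expr-binding}: whenever $e$ mentions a location $\loc'$, we descend into the (unique, earlier-created) binding $\Bind{t'}{\loc'}{e'}$ for that location. Under this reading the hypothesis that $e$ contains no $\Genrop$ is inherited by every sub-expression we descend into, so the induction hypothesis is always available at the premises. The recursion is well founded because each binding created by Initialization or by a rule of \refdef{def:rules} references only locations bound by strictly earlier bindings, so I can equivalently induct on the position at which $\Bind{t}{\loc}{e}$ was added. The engine of the argument is the observation recorded just above the statement: for each binding-creating rule, erasing the locations and expressions leaves exactly an enhanced Dolev--Yao inference on the symbolic terms; moreover the leading operator of $e$ determines the constructor of $t$ and the terms bound to the referenced locations (\eg a binding whose expression is $\Pair{\loc_1}{\loc_2}$ can only have been produced by the Pair Introduction rule, forcing $t=\Prr{t_1}{t_2}$ with $\loc_i$ bound to $t_i$).

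For the base cases, the $\Genrop$-free leaves are exactly $\Param$, $\Read$, and $\Quot$. If $e$ is $\Param{n}$ then the binding was emitted when processing an input parameter, so $\Prm{t}$ occurs in \rl and $t$ is \emph{obtained}; if $e$ is $\Read{n}$ then it was emitted for a reception, so $\Rcv{ch}{t}$ occurs in \rl and again $t$ is obtained. In both cases $t$ is a hypothesis of the derivation. If $e$ is $\Quot{s}$ then $t=\Qt{s}$, derivable by the quotation rule of the enhanced system. The remaining leaf $\Genr{s}{k}$ is excluded by hypothesis, which is precisely why freshly generated atoms---derivable from nothing---are kept out of the claim.

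For the inductive step I read off the matching Dolev--Yao rule. The three introduction expressions $\Pair{\loc_1}{\loc_2}$, $\Encr{\loc_1}{\loc_2}$, and $\Hash{\loc}$ bind $\Prr{t_1}{t_2}$, $\En{p}{k}$, and $\Hs{t_1}$ respectively; applying the induction hypothesis to the $\Genrop$-free sub-expressions at $\loc_1,\loc_2$ (resp.\ $\loc$) gives derivability of the components, and the corresponding pairing, encryption, or hashing rule yields $t$. The elimination expressions $\Frst{\loc}$ and $\Scnd{\loc}$ arise only from the Pair Elimination rules, so $\loc$ is bound to a pair $\Prr{t_1}{t_2}$ with $t$ its first (resp.\ second) component; the induction hypothesis gives the pair and projection gives $t$. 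The expression $\Decr{\loc}{\loc_1}$ arises only from the Decryption rule, so $\loc$ is bound to $\En{p}{k}$ with $t=p$ and $\loc_1$ to the matching decryption key $\inv{k}$; the induction hypothesis applied to \emph{both} sub-expressions delivers $\En{p}{k}$ and $\inv{k}$, and the decryption rule yields $p$.

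The one genuinely delicate case is the public-key expression $\PubOf{\loc_{pri}}$, and it is also where I expect the main work. This binding, $\Bind{\pub{n}}{\loc_{pub}}{\PubOf{\loc_{pri}}}$, is produced during Initialization with $\loc_{pri}$ bound by $\Genr{\Ikey}{k}$; hence its recursive expansion \emph{does} contain $\Genrop$, so under the hypothesis this case does not arise, and any $\Genrop$-free binding for $\pub{n}$ instead comes from $\pub{n}$ being obtained or extracted by an elimination, cases already covered. (Were $\PubOf$ ever applied to a $\Genrop$-free private-key location, the induction hypothesis would give $\Ik{\Av{n}}$ and the enhanced rule deriving $\Ak{\Av{n}}$ from $\Ik{\Av{n}}$ would close the case.) The two points I would verify explicitly are (i) that ``contains no $\Genrop$'' is the \emph{recursive} predicate over the binding tree---without this the hypothesis fails to transfer to premises and the statement is in fact false, since generated components can be repackaged by $\Pair$ into a top-level expression that is superficially $\Genrop$-free---and (ii) the determinacy claim that each expression operator pins down both the constructor of the bound term and the terms at the referenced locations, which rests on the fact that bindings are created \emph{only} by Initialization and the rules of \refdef{def:rules}.
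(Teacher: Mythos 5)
Your proof is correct and follows essentially the same route as the paper's: the paper's entire argument is ``an easy induction on the construction of \pr'' using the observation that erasing locations and expressions from each binding-creating rule leaves an enhanced Dolev--Yao inference, which is exactly the engine you describe (your induction on the binding tree is the same well-founded order, as you note). The one place you go beyond the paper is your point (i), and it is well taken: the hypothesis ``$e$ has no occurrence of \Genrop'' is stated for a flat expression, and under that literal reading the proposition is actually false --- a role such as $\Prm{\Ch{1}};\ \Snd{\Ch{1}}{\Prr{\Dt{1}}{\Dt{2}}}$ with both atoms generated produces a binding for the pair whose expression $\Pair{\loc_1}{\loc_2}$ contains no literal \Genrop, yet $\Prr{\Dt{1}}{\Dt{2}}$ is not derivable from the obtained terms. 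The recursive reading of \Genrop-freeness through the binding structure, which you make explicit, is what the statement must intend and is precisely what makes the induction hypothesis available at the premises; the paper's one-line proof glosses over this.
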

\begin{proof}
  The proof is  an easy induction on the construction of \pr, using the
  observation immediately above.
\end{proof}

\begin{proposition}
  \label{prop:derived-then-bind}
  Suppose \pr is a \proc generated from a role \rl by
  our Initialization and  Closure Rules.  %
  Further assume that \pr is closed. %

  If term $t$ occurs as a subterm of \rl and is derivable in the
  enhanced Dolev-Yao system from the set of terms obtained by \rl,
  then there exist $\loc$ and $e$ such that %
  $ \Bind {t} {\loc} {e}$ is a statement in \pr.
\end{proposition}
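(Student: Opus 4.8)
The plan is to induct on a Dolev--Yao derivation of $t$, after first putting the derivation in normal form. Recall that the terms bound in \pr satisfy two complementary families of closure facts, which I will use repeatedly. From the main loop, every \emph{obtained} term of \rl receives a binding at the moment its $\Prmop$ or $\Rcvop$ event is processed, so the set of terms obtained by \rl is contained in the set of terms bound in \pr. From the hypothesis that \pr is closed (\refdef{def:closed}) I obtain the ``constructive'' facts---Pair, Encryption, and Hash Introduction: if the components of a pair, encryption, or hash are bound and the compound term lies in \unv, then the compound is bound---and the ``destructive'' facts---Pair Elimination: if $\Prr{t_1}{t_2}$ is bound then so are $t_1$ and $t_2$; and Decryption: if $\En{p}{k}$ and $\inv{k}$ are bound then $p$ is bound. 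Since $t$ is a subterm of \rl we have $t \in \unv$, and more generally every subterm of \rl lies in \unv.

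First I would normalize the derivation so that no elimination rule (projection or decryption) is applied to the conclusion of an introduction rule; this is the standard normalization for the Dolev--Yao natural-deduction calculus, and it removes the redundant ``destruct-after-construct'' detours. The statement I then prove, by induction on a normal derivation, is: \emph{for every subterm $s$ of \rl, if $s$ has a normal derivation from the terms obtained by \rl, then $s$ is bound in \pr}. The payoff of normalization is a subterm property along the spine of eliminations: when the last rule is an elimination, its major premise is not the conclusion of an introduction, so tracing the chain of eliminations down to its root exhibits the major premise as a subterm of an obtained term---hence a subterm of \rl and a member of \unv.

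For the construction cases I would argue directly. If the last rule builds $t = \Prr{t_1}{t_2}$ (respectively $\En{t_1}{t_2}$ or $\Hs{t_1}$), then because $t$ is a subterm of \rl so are its immediate components, and these are derived by strictly smaller normal sub-derivations; the induction hypothesis binds them, and since $t \in \unv$ the matching Introduction Condition produces a binding for $t$. For the elimination cases I would invoke the subterm property above. If $t$ arises by projecting a pair $\Prr{t_1}{t_2}$, that pair is a subterm of \rl with a smaller normal derivation, so the induction hypothesis binds it and the Pair Elimination Condition binds $t$. If $t = p$ arises by decrypting $\En{p}{k}$ with $\inv{k}$, the same reasoning binds $\En{p}{k}$; it remains to bind $\inv{k}$, and here I would note that $k$ is a subterm of $\En{p}{k}$, hence of \rl, and that $\inv{k} = k$ unless $k$ is an asymmetric key, so in the symmetric case $\inv{k} = k \in \unv$ is bound by the induction hypothesis. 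With $\En{p}{k}$ and $\inv{k}$ both bound, the Decryption Condition binds $p$.

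The main obstacle will be the steps where the induction would leave the set of subterms of \rl, and these all concern keys and the two enhanced rules. The delicate decryption subcase is when $k$ is an asymmetric key and $\inv{k}$ is its partner. If $k = \Ak{\Av n}$ then any derivation of $\inv{k} = \Ik{\Av n}$ forces the private key $\Ik{\Av n}$ to be obtained or generated, so it occurs in \rl and the induction hypothesis applies; the genuinely new case is $k = \Ik{\Av n}$, where $\inv{k} = \Ak{\Av n}$ may fail to be a subterm of \rl yet is derivable via the enhanced public-key rule, so I must appeal to the key-pair treatment of \refsec{sec:generation-key-pairs} to supply the corresponding $\Pubofop$ binding. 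The enhanced quote rule is analogous: deriving $\Qt{s}$ as an intermediate component presupposes that the compiler binds every quotation subterm during initialization. I would therefore record, as the two points where the argument leans on the initialization phase rather than on the closure conditions alone, the binding of derived public keys and the binding of quotations; verifying these, together with confirming that normalization keeps every elimination's major premise inside \unv, are the steps I expect to demand the most care.
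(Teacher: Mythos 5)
Your proof follows the same route as the paper's: induction on a Dolev--Yao derivation of $t$, matching each inference to the corresponding closure condition, with your normal-form hypothesis playing exactly the role of the paper's choice of a \emph{shortest} derivation (a shortest derivation is detour-free, which is what lets both arguments keep the major premises of eliminations among the subterms of \rl and hence in \unv). The edge cases you flag --- the \Pubofop binding for a derived public key, quotations, and the minor case split on whether the expression $e$ in a binding is already a pair or encryption expression --- are exactly the details the paper's own proof either dispatches in a sentence or silently elides, so your attempt is, if anything, more explicit than the published argument rather than less complete.
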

\begin{proof}
  Let $\mathcal{D}$ be a shortest derivation of $t$
  from the set of terms obtained by \rl.
  The proof is by induction on \D,
 again using the fact that the
  enhanced Dolev-Yao rules are a kind of erasure of the
  \Bindop-creating closure rules.  %
  Since the closure rules have more structure than do Dolev-Yao
  inferences we need to check some
  details.   %
  
  Suppose the last inference in $\mathcal{D}$ is a Dolev-Yao
  Pair Introduction, say of a term $\Pr{t_1}{t_2}$.  %
  By the induction hypothesis \pr has  bindings for $t_1$ and for $t_2$.
  The  fact that $\mathcal{D}$ is a shortest
  derivation means that $\Pr{t_1}{t_2}$ has not already been derived,
  and so \pr has no bindings for  $\Pr{t_1}{t_2}$.  %
  That is, the side condition for the  Pair Introduction closure rule
  is satisfied, and since \pr is closed, the Pair Introduction closure
  rule will have fired, as desired.

  The arguments for Encryption Introduction and Hash Introduction are
  just the same.

  For Pair Elimination Left and Right and for Decryption the structure of
  the induction is similar but now as we apply the closure rules the
  side conditions are on the shape of the expressions $e$.

  Consider the case of the Decryption Rule.
  We have, in \D, an inference
  \begin{prooftree}
    \AxiomC {$
      {\En{p}{k}} \quad 
      {\inv{k}}
      $}
    \UnaryInfC {$ {p} $}
  \end{prooftree}
so we want to argue the \pr has a binding for $p$.
 By induction we have, in \pr, bindings of the form
  \[
      \Bind{\En{p}{k}} {\loc} {e} \quad \text{ and } \quad
      \Bind{\inv{k}}{\loc_1}{e_1} 
    \]
    Now we have two cases: either $e$ is an encryption expression for 
    $\En{p}{k} $ or not.

    If so, then \pr already contains a binding for $p$.  If not, then
    we may fire the Decryption closure rule.

    The argument for Pair Elimination is similar.
\end{proof}

\subsection{Executability}
\label{sec:executability}

Several authors (\eg,
\cite{caleiro2006semantics},\cite{basin2015alice},
\cite{chevalier2010compiling}) have defined notions of
\emph{executability} of a protocol, 
statically-checkable properties that give confidence that a protocol
can be run to completion.  %

 We will eventually define \emph{non-executability} as
\emph{failure of compilation}.
To motivate that we start with the question: how can compilation fail?

Compilation succeeds precisely when  saturation succeeds at each role event.
Closure always halts:  the process runs until no more rules can be
applied, and our termination analysis say this will eventually halt
(\refthm{thm:termination}) . %
So the only thing that can go wrong is that we halt with a 
closed \proc that
isn't justified.

So suppose the  \proc \pr is constructed from role \rl and is
 closed, but not justified.
This means
(cf. \refdef{def:justified}) that either
\begin{itemize}
\item there is a binding
      $\Bind {\En{p}{k}} {\loc} {e}$ in \pr, %
      with $e$ not an $\Encrop$-expression, %
      such that for no %
      $\loc_1, e_1$ do we have
      $\Bind {\inv{k}}{\loc_1}{e_1}$ in \pr,
      or
\item there is a binding
  $\Bind {\Hs{t_1}}  {\loc} {e}$ in \pr, %
  with $e$ not a $\Hashop$-expression, %
  such that for no 
  $\loc_1, e_1$ do we have
  $\Bind {t_1}{\loc_1}{e_1} $ in \pr.
\end{itemize}

Since in each case the terms are associated with a neutral expression,
the \proc needs to do a decryption or check-hash respectively.

But in the first case, the term $\En{p}{k}$ %
is derivable from the terms obtained in \rl %
(\refprop{prop:bind-then-derived}), but the term $\inv{k}$ is not
derivable from the terms obtained in \rl %
(\refprop{prop:derived-then-bind}).

Similarly, in the second case the term $\Hs{t_1}$ %
is derivable from the terms obtained in \rl, but the term $t_1$ is not
derivable from the terms obtained in \rl %

 So failure of compilation is reflected by the existence of terms from
the role that, by the results of the last section, cannot be Dolev-Yao
derived but that are required in order for the \proc to be able to
construct statements it needs.

This analysis motivates the following definition.
\begin{definition}
A role \rl is \emph{non-executable} if the process of 
Initialization followed by closure under the 
  rules of \refdef{def:rules} yields a \proc which is 
  not justified.
\end{definition}

In this situation our compilation process does not return a runnable
\proc but in a precise way, it is not the compiler that is to blame: either there
is an  encryption derivable from the role whose decryption key is not derivable, or
there is a derivable hash term whose body is not derivable.   Each of these 
situations is one in which the \proc at hand cannot successfully
validate a necessary check.

Summarizing informally: we call a role \emph{non-executable} if either %
some reception leads to an encryption whose decryption key cannot be
derived, or %
some reception leads to a hash whose body cannot be derived.
Non-executability can thus be determined statically: it is witnessed
at compile-time by a failure of saturation.

\paragraph{Executable vs Non-Executable}
It is a bit awkward that we have been exploring
\emph{non-executability} as opposed to \emph{executability.} 
But there is a good reason for that: to use the phrase
\emph{executable} as the negation of \emph{non-executable} is quite a
misleading choice of phrase!

To explain,  let us unravel 
what it means for a \proc \pr to be 
\textbf{not} {non-executable}.   This will mean that \pr is closed
and justified.   %
But \emph{execution of such a \pr can still fail.}

This is a necessary consequence of the fact that encryption can be
randomized.  
Different occurrences of $t$ in the protocol may be associated with
different locations in our \proc, let us say %
$\Bind {t}{\loc_1}{e_1}$ and
$\Bind {t}{\loc_2}{e_2}$.
And  if $t$ is a term containing a randomized encryption as a subterm,
then at runtime the locations $\loc_1$ and $\loc_2$ can have different
values.

 To see why this is a problem, return to the encryption case.
 Suppose \pr has
a binding $\Bind {\En{p}{k}} {\loc} e$  with $e$ neutral, so that the
term ${\En{p}{k}}$ was derived from a reception.
Since \pr is justified we know that we have
a binding for $\inv{k}$, say %
$\Bind {\inv{k} } {\loc_1} {e_1}$. 
 Saturation will have emitted  a suitable decryption statement %
$\Bind {p} {\loc_{new}} {\Decr {\loc}{\loc_1}}$.

Now imagine that the encryption above is a symmetric encryption,
so $\inv{k}$ is $k$.  But
if $k$ is a term itself containing a randomized encryption as a subterm, then
it is possible  at runtime  that %
\emph{the value of the occurrence of $k$ used to build the encryption is
  not the same as the value of the occurrence of $k$ used as
  decryption key}.  This means that decryption will fail at runtime.

Similarly (more simply, in fact) suppose $\Hs{t_1}$ and $t_1$ are each
bound, as required by being justified.   A Check Hash test can fail if
the two occurrences of $t$ have different runtime values.

 What's going on here is simply the fact that identity at the symbolic
 term level does not translate into identity of values as the
 bitstring level.  This is not a weakness of the compiler.  In a sense 
  it is a lack of expressiveness of our symbolic term language, an
 unavoidable consequence of the fact that the randomness of encryption
 is not reflected in the syntax of terms.

The takeaway from this discussion is that although we can statically
detect certain fatal obstacles to a role being able to be executed by
compiled code, the absence of those obstacles does not guarantee that the
compiled code will indeed run to completion.   And this is why we
hesitate to use the term ``executable.''

\section{Valuations, Stores, and Transcripts}
\label{sec:transcripts}

\subsection{Raw Transcripts}
\label{sec:raw-transcripts}

Transcripts are sequences of runtime actions describing the observable
behavior of a protocol execution.

\begin{definition}
A transcript is a list of \Act{\rtval}.
\end{definition}

\begin{example} \label{eg:xpt-resp1} 
The following is a transcript.
  \[ \Prm {r_1} ; \Rcv {r_1} {r_2} ; \Snd {r_1} {r_2} 
   \]
This describes an execution in which 
\begin{enumerate}
\item value $r_1$ is taken as a parameter;
\item a value $r_2$ is received at channel $r_1$;
\item that same  value $r_2$ is sent over channel $r_1$.
\qed
\end{enumerate}
\end{example}

Of course, it's not clear what this should mean if $r_1$ isn't in fact
a channel value.  And there will be more interesting structural
constraints to be observed if we expect a transcript to be executions
of roles or \procs.  In this section we define
\begin{enumerate}
\item what it means for a given transcript to denote a possible
  semantics of a given role \rl %
  (we will use the phrase ``valid transcript for \rl''), and 
\item 
what it means for a given transcript to denote a
possible execution of a given \proc \rl %
  (we will use the phrase ``valid transcript for \pr'').
\end{enumerate}
We use the phrase ``raw transcript'' to refer to a list
of $\Act{\rtval}$ without making any claims of validity.

\subsection{Transcripts for a Role}
\label{sec:transcripts-role}

To be a transcript for a role an $\Act{\rtval}$ sequence should
satisfy two conditions.

First, the number of items and their character should match the role:
for example, if the third element is the role is a \Sndop of a term
then the the third element of the transcript should be a \Sndop of a
runtime value, and so on.

Second, we expect a certain compositionality: %
for example if the role has $\Prr{t_1}{t_2}$ as a reception and then
${t_1}$ as a transmission, and the transcript %
associates runtime value $r$ with $\Prr{t_1}{t_2}$ and %
associates runtime value $r_1$ with $t_1$ then  $r_1$ should  be the
first projection of $r$.

The first constraint is straightforward to express.
The second constraint needs some attention, and is the subject of 
\refsec{sec:valuations}, culminating in 
Definition~\ref{def:valid-role-transcript}.

\begin{example}
Consider this role
  \begin{align*}
    [&\Prm {\Ch 1};  \ \Prm {\Ik {(\Av 2)}} ; \\
    &\Rcv {\Ch 1} {\En {\Nm 0} {(\Ak{ (\Av 2)} )}} ; \\
    &\Snd {\Ch 1}{(\Nm 0)}]
  \end{align*}
and consider the following raw transcript
  \begin{align*}
    &\Prm {r_1}; \ \Prm {r_2} ; \\
    &\Rcv {r_1} {r_3}; \\
    &\Snd {r_1} {r_4}
  \end{align*}

  By the way this is the first place where we see the technical
  utility of the \Actop parameterized data type: it is transparent how
  the actions over runtime values in the transcript are intended to
  correspond to the actions defined in the role.

  This transcript describes an execution in which %
  $r_1$ is a value assigned to channel 1; %
  $r_2$ is a value assigned to ${\Ik {\Av 2}}$ (the private key of the
  agent executing the role); %
  $r_3$ is the value received on $r_1$; and %
  $r_4$ is the value sent on $r_1$.

  This transcript satisfies our first informally-stated constraint.
  But the  more interesting second constraint requires (for example)
  \begin{enumerate}
  \item $r_1$ should be the sort of value that denotes a channel,
    $r_2$ should be a key, and so on.
  \item $r_3$ should be a value that arises by encrypting $r_4$
    with the key-partner of $r_3$
  \end{enumerate}
Being a ``valid'' transcript for a role will mean obeying sort
constraints (as in (1)  above) and
relationship between values (as in (2) above).   
\qed \end{example}

Here is an important point:   When a
term such as $\Ch{1}$ or $\Nm{0}$ is used more then once in a role, the
corresponding runtime value occurrences must be equal.
That is, there should be a \emph{function} from such terms to runtime values.
But since we are working with randomized encryption, the value
associated with a symbolic encryption %
$\En{p}{k}$ will not be uniquely determined by the values 
associated with ${p}$ and $k$.  Rather, there is a \emph{relation}
between symbolic terms that involve encryptions and runtime values.

This subtlety is reflected in \refdef{def:valuation}, where the
fundamental construct that generates valid transcripts is a relation; 
 this relation will be required to be functional on elementary terms.

\subsubsection{Valuation for a Role}
\label{sec:valuations}

Let \rl be a \role and let \tr be a transcript.

To say that the runtime events of \tr are related to \rl pointwise in
a systematic way is to say that there is a 
relation \rolvalop from $\Term$ to $\rtval$ such that
  \begin{align*}
    (\mapR {\actmap{\rolvalop}} \ \rl \tr)
  \end{align*}
  We may say that \tr is \emph{induced} by $\rolvalop$.


Next,
in order to view \tr as valid transcript for a role we will also 
 insist on some compositionality conditions, articulated in the next definition.
\begin{definition}[Valuation]
  \label{def:valuation}
  A relation   $\rolvalop \subseteq \Term \times \rtval$
  is a \emph{term valuation} if it satisfies the following conditions.
  \begin{enumerate}
  \item \rolvalop is functional on elementary terms:

    For all $t, r_1, r_2$, if $t$ is elementary and
    $\rolval{t}{r_1}$  and $\rolval{t}{r_2}$
    then $r_1 = r_2$

  \item $\rolvalop$ respects sorts : 

    For all $t, r$, if $t$ is elementary and 
    $\rolval{t}{r}$ then 
    $\sortof t = \rtsrt \ r$

  \item $\rolvalop$ respects pairing:

    For all $t_1, t_2, r$, %
    if $\rolval{\Prr{t_1}{t_2}}{r}$ then
    there exist $r_1, r_2$ such that
    $\rolval{t_1}{r_1}$, 
    $\rolval{t_2}{r_2}$, and
    $\rtpair{r_1}{r_2} = r$.

  \item  $\rolvalop$ respects hashing:

    For all $t_1, r$, %
    if $\rolval{\Hs{t_1}}{r}$ then
    there exist $r_1$ such that
    $\rolval{t_1}{r_1}$ and
    $\rthash{r_1} = r$.

  \item   $\rolvalop$ respects quote:

    For all $s, r$, %
    if $\rolval{\Qt{s}}{r}$ then %
    $ r = \rtquote s$

  \item $\rolvalop$ respects key pairs:

    For all $t_1, t_2, r_1, r_2$, %
    if $(t_1, t_2)$ makes a key pair,
    $\rolval{t_1}{r_1}$, and
    $\rolval{t_2}{r_2}$, then 
    $\rtkypr{r_1}{r_2} = true$

  \item $\rolvalop$ has the disjunctive encryption condition:

    For all $p, k_e, r_e$, %
    if $\rolval {\En{p}{k_e}} {r_e}$ then 
    at least one of the following holds:
    \begin{itemize}
    \item (the encr condition)
      there exists $r_p, r_{ke}$ such that 
      \begin{itemize}
      \item $\rolval {p}{r_p}$
      \item $\rolval {k_e}{r_{ke}}$
      \item $\rtencr {r_p}{r_k}{r_e}$
      \end{itemize}

    \item (the decr condition)
      there exists $r_p, r_{kd}$ such that 
      \begin{itemize}
      \item $\rolval {p}{r_p}$
      \item $\rolval {\inv{(k_e)}}{r_{kd}}$
      \item $\rtdecr {r_e}{r_{kd}} \opteq {r_p}$
      \end{itemize}
    \end{itemize}
  \end{enumerate}
\end{definition}
At first glance we might expect a condition requiring
$\rolvalop$ to respect inverse in the sense that
if $t_1$ and $t_2$ are inverses,
$\rolval{t_1}{r_1}$,
$\rolval{t_2}{r_2}$, then 
$\rtinv{r_1}{r_2} = true$.
But this is too much to ask given that \rolvalop is only a relation:
if the sort of $t$ is not \Akey or \Ikey then $t$ can have several 
images under $\rolvalop$, which will certainly will not be runtime
inverses of each other.

A valid transcript for a role \rl is a raw  transcript induced by a
valuation.

\begin{definition}[Valid Transcript for a Role]
  \label{def:valid-role-transcript}
  Let \rl be a role. %
  A transcript \tr is a 
  \emph{valid transcript for \rl} if 
  there exists a relation %
  $\rolvalop \subseteq \Term \times \rtval$
such that 
  \begin{align*}
   \text{ $\rolvalop$ is a valuation }
    \, \text{ and } \,
    (\mapR {\actmap{\rolvalop}} \ \rl \tr)
  \end{align*}
\end{definition}
We will sometimes drop the word ``valid,'' and sometimes say ``\tr is a
transcript for \pr'' if no confusion can arise.

\subsection{Transcripts for a Proc}
\label{sec:transcripts-proc}

As we did for roles, we want to define what it means for a raw transcript to
be a suitable transcript for a given \proc.
This is completely straightforward.

The first observation is that \procs have ``internal'' statements
(bindings and checks) that will not contribute to transcripts.
So we extract the statements we do care about:

\begin{definition}
  Let \pr be a \proc.  The body of \pr is a sequence of statements.  From
  this sequence we can extract the Events of the role and call this the 
  \emph{trace} of the proc.
\end{definition}

The trace of a \proc is a sequence of %
$\Act{\Loc}$ statements.   An execution of the \proc is given by an
assignment of runtime values to the locations of the \proc.

Similarly as for \roles we formalize the idea that a transcript \tr
``lines up'' with the actions of \proc \pr by saying that there exists
a relation %
$\sto : \Loc \to \rtval $ 
such that
\[
    (\mapR {\actmap{\sto}} \ (\trace\ \pr) \ \tr) \\
\]
But---just as for roles---we don't want to consider arbitrary
associations of values to locations. %
To say that the transcript really is a 
``run'' of the \proc is simply to say that it arises as the \proc
executes its bindings and checks.     So the relation between locations
and runtime values will be simply the store of the \proc as it executes.

Note the difference between the transcript notion for \procs and for
\roles: for \procs the relationship \sto between locations and runtime
values will be a \emph{function} not just a relation.  In a sense this
reflects the fact that \procs are deterministic programs.

So the correspondence condition induced by \sto on the actions of \pr
and \tr will simply be
\[
    \tr = \map {\actmap{\sto}} (\trace\ \pr)
\]
  Next we  record what it means to be a store for a \proc.

\subsubsection{Store for a Proc}
\label{sec:store-proc}
A store is a partial function from locations to runtime values that
respects the intended semantics of the bindings and checks.
A store \sto is a ``store for \pr'' if \sto respects the
statements of \pr in the obvious way.
\begin{definition}[Store]
  \label{def:store} %
  Let \pr be a \proc
  and let $\sto : \Loc \to \rtval $ 
  be a partial function on locations.
  Say that $\sto$ is a \emph{store for
    \pr}  if $\sto$ respects the checks and the (expressions of the)
  bindings of \pr in the following sense.
  
    \begin{enumerate}
    \item  
      if pr has $\CSrt{\loc}{s}$ %
      then $\rtsrt (\sto \loc) = s$. 

    \item 
      if pr has $\CSame{\loc_1}{ \loc_2}$ %
      then $\sto \loc_1 = \sto \loc_2$

    \item 
      if pr has $\CKypr {\loc_1} {\loc_2}$ %
      then $\sto \loc_1$ and $ \sto \loc_2$ make a runtime key pair,
      that is,
      $\rtpubof \; (\sto \loc_1) = (\sto \loc_2)$
      
    \item if \pr has $\CHash{\loc_1}{\loc_{2}}$
      then $\sto \loc_2 = \rthash (\sto \loc_1)$

    \item if \pr has $ \CQot{\loc}{s} $
      then $\sto \loc = \rtquote s$

    \item 
      if pr has
      $\Bind{t}{\loc}{\Pair{\loc_1}{\loc_2}}$
      then
      ${ \rtpair (\sto \loc_1) (\sto \loc_2) = (\sto v) }$
      holds

    \item 
      if pr has
      $\Bind{t}{\loc_e}{\Encr{\loc_p}{\loc_k}}$
      then
      ${ \rtencr (\sto \loc_p) (\sto \loc_k) (\sto \loc_e) }$
      holds

    \item  
      if pr has
      $ \Bind {t}{\loc}{\Hash{\loc_1}}$
      
      then
      ${(\sto \loc) =  \rthash (\sto \loc_1) }$
      holds.

    \item  
      if pr has
      $ \Bind {t}{\loc_1}{\PubOf{\loc}}$

      then
      $\rtpubof (\sto \loc)  \opteq {(\sto \loc_1)} $
      holds.

    \item 
      if pr has
      $ \Bind {t}{\loc_1}{\Frst{\loc}}$ 

      then
      ${\rtfrst (\sto v) \opteq (\sto \loc_1)  }$
      holds

    \item 
      if pr has
      $ \Bind {t}{\loc_1}{\Scnd{\loc}}$ 

      then
      ${\rtscnd (\sto v) \opteq (\sto \loc_1)  }$
      holds

    \item
      if pr has
      $ \Bind {t}{\loc_p}{\Decr{\loc_e}{\loc_k}}$ 

      then
      ${\rtdecr (\sto \loc_e) (\sto \loc_k) \opteq (\sto \loc_p) }$
      holds.

    \item
      if pr has
      $ \Bind {t} {\loc} {\Quot{s}}$ 

      then
      $(\sto \loc) = \rtquote{s}$
      holds.

    \end{enumerate}
\end{definition}

\begin{definition}[Valid Transcript for a Proc] 
  \label{def:valid-proc-transcript}
  Let \pr be a \proc. %
  A transcript \tr is a 
  \emph{valid transcript for \pr} if there exists a partial function %
  $\sto : \Loc \to \rtval$ such that 
  \begin{align*}
          \; \text{ $\sto$ is a store for \pr}
          \; \text{ and } 
    \tr = \map {\actmap{\sto}} (\trace \pr)
  \end{align*}
\end{definition}

\begin{example}   \label{eg:proc-store}
Let \pr be the following \proc.

  \begin{lstlisting}[frame=single,backgroundcolor=\color{yellow!10},
    caption={Proc for \refeg{eg:proc-store}}]
         Bind (Ch 1, L 1) (Param 1); 
         Csrt (L 1) Chan; 
         Evnt (Prm (L 2)); Bind (Nm 0, L 2) (Param 2);
         Csrt (L 2) Name; 
         Evnt (Prm (L 3));
         Bind (Ik (Av 2), L 3) (Param 3);
         Csrt (L 3) Ikey;

         Evnt (Rcv (L 1) (L 4));
         Bind (En (Nm 0) (Ak (Av 2)), L 4) (Read 1);
         Bind (Nm 0, L 5) (Decr (L 4) (L 3));
         Same (L 5) (L 2); 

         Evnt (Snd (L 1) (L 2))
\end{lstlisting}

Let the store \sto be defined as follows.
\begin{align*}
  \sto {\loc}_1 &= r_1 &
  \sto {\loc}_2 &= r_0 \\
  \sto {\loc}_3 &= r_2 &
  \sto {\loc}_4 &= r_3 \\
  \sto {\loc}_5 &= r_0 \\
\end{align*}
Then \sto is a store for \pr, as long as %
\begin{itemize}
\item $r_0$ is a value of sort \Nmop, [line 4]
\item $r_1$ is a channel value, [line 2]
\item $r_2$ is a value of sort \Ikey [line 7]
\item $r_0$ (the value at $\loc_5$) 
  is a decryption of $r_3$ (the value at $\loc_4$) by 
  $r_2$ (the value at $\loc_3$) [line 11 and line 12]
\end{itemize}
\qed \end{example}

\begin{example} \label{eg:proc-xpt}
Continuing \refeg{eg:proc-store}, the following transcript \tr
  \begin{align}
    &\Prm {r_1}; \ \Prm {r_2} ; \\
    &\Rcv {r_1} {r_3}; \\
    &\Snd {r_1} {r_4}
  \end{align}
is a valid transcript for our \proc.

To see why, first extract the trace of the \proc:

  \begin{lstlisting}[frame=single,backgroundcolor=\color{yellow!10},
    numbers=none,
caption={Trace of the \proc for \refeg{eg:proc-xpt}}]
         Evnt (Prm (L 2)); 
         Evnt (Prm (L 3));
         Evnt (Rcv (L 1) (L 4));
         Evnt (Snd (L 1) (L 2))
\end{lstlisting}

It is easy to see that 
    $\tr = \map {\actmap{\sto}} (\trace\ \pr)$
and we have seen in
\refeg{eg:proc-store} that it arises from the valid transcript \sto
\qed \end{example}

\section{Reflecting Transcripts}
\label{sec:relating}

If role \rl is compiled to \proc \pr, we want to know that any
execution of \pr is an execution of \rl.  This is captured
by the \emph{Reflecting Transcripts} theorem:

\begin{quote}
  \textbf{Theorem}
  Let \rl be a role, and suppose that \rl successfully compies to \proc
  \pr.  %
  Then any valid transcript for \pr is a valid transcript for \rl.
\end{quote}

\subsection{Outline of the Proof}
\label{sec:outline-proof}

Recall \refdef{def:tl}, relating terms to locations based on the
\Bindop statements in \proc:
\[
  \tl{t}{\loc} \eqdef 
  \; \text{ for some $e$,} \;
  (\Bind {t}{l} {e})  \in \pr
\]

Now suppose \tr is a transcript for \pr.  
By definition \tr is determined by a store function, from \Loc to
\rtval.   
By precomposing  this function with the relation \tlop , we get a
relation \tv from  \Term to \rtval,
as suggested by this picture.  
  \begin{center}
    \begin{tikzcd}
      \rl\arrow[rr, Rightarrow,  "\tlop"] 
      \arrow[dr,dashrightarrow, swap, "\rolvalop"]
      & & \pr \arrow[dl, "\sto"] \\
      & \tr
    \end{tikzcd}
  \end{center}  

  This yields (modulo lifting these functions and relations to the
  \Actop data type) a raw transcript.  It will not be hard to see that
  this transcript is in fact our original \tr, and is induced by \tv,
  and so \tr is a \emph{raw transcript for \rl}, induced by \tv.  To
  establish that \tr is a \emph{valid transcript for \rl} we need to
  show that it is induced by a valuation.  This is where the
  Saturation conditions on \pr come into play.

\subsection{The Proof}
\label{sec:proof}

\begin{theorem}[Reflecting Transcripts]
\label{thm:reflecting-transcripts}
  Let \rl be a role, and suppose that \rl successfully compies to \proc
  \pr.  %
  Then any transcript for \pr is a transcript for \rl.
\end{theorem}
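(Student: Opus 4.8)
The plan is to construct, for each transcript $\tr$ of $\pr$, an explicit valuation witnessing that $\tr$ is also a transcript for $\rl$, following the triangle sketched in \refsec{sec:outline-proof}. By \refdef{def:valid-proc-transcript} there is a store $\sto\colon\Loc\to\rtval$ for $\pr$ with $\tr = \map\ \actmap{\sto}\ (\trace\ \pr)$. First I would set
\[
  \rolvalop \;\eqdef\; \tlop \mathbin{;} \sto ,
\]
the relational composition of the term-to-location relation $\tlop$ of \refdef{def:tl} with $\sto$; unfolding, $\rolval{t}{r}$ holds exactly when some binding $\Bind{t}{\loc}{e}$ occurs in $\pr$ with $\sto\ \loc = r$. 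The proof then splits into two obligations: (i) that $\tr$ is induced by $\rolvalop$, i.e.\ $\mapR\ \actmap{\rolvalop}\ \rl\ \tr$, and (ii) that $\rolvalop$ is a valuation in the sense of \refdef{def:valuation}. By \refdef{def:valid-role-transcript} these together give the theorem.

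Obligation (i) is the easy half and amounts to bookkeeping on the triangle. A successful compilation supplies the invariant $\mapR\ \actmap{\tlop}\ \rl\ (\trace\ \pr)$ of \refsec{sec:invariants}. Composing this pointwise through the common middle list $\trace\ \pr$ with the defining equation $\tr = \map\ \actmap{\sto}\ (\trace\ \pr)$, and using \reflem{lem:act-map-composition} (so that $\actmap{\rolvalop} = \actmap{\tlop} \mathbin{;} \actmap{\sto}$) together with the routine fact that $\mapR$ commutes with relational composition, yields $\mapR\ \actmap{\rolvalop}\ \rl\ \tr$.

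The substance is obligation (ii), where each of the seven clauses of \refdef{def:valuation} is matched to a closure or justification clause of \pr and the corresponding clause of \refdef{def:store}. A uniform preliminary is the lemma that $\sameness{\loc_1}{\loc_2}$ implies $\sto\ \loc_1 = \sto\ \loc_2$: since $\sto$ respects every $\CSameop$ statement and $\samenessop$ is the least equivalence relation generated by those statements (\refdef{def:sameness}), ``$\sto$-equality'' contains $\samenessop$. With this in hand, functionality on elementary terms and respect for sorts follow from the Check Equality and Check Sort conditions of \refdef{def:closed} plus the $\CSrtop$ store clause; respect for quotation and for key pairs follow from the Check Quote and Check Key Pair conditions plus the $\CQotop$ and $\CKyprop$ store clauses; and respect for pairing follows by a case split on whether the binding's expression is a pair expression, using either the $\Pairop$ store clause directly or the Pair Elimination condition together with the pairing axiom~(\ref{eq:pair-project}).

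I expect the crux to be the disjunctive encryption condition, which is the one place where \emph{justification}, and not merely closure, is needed. Given $\rolval{\En{p}{k_e}}{r_e}$ arising from a binding $\Bind{\En{p}{k_e}}{\loc}{e}$, I would split on whether $e$ is an encryption expression for $\En{p}{k_e}$. If it is, the $\Encrop$ store clause delivers the \emph{encr} disjunct at once. If it is not, then because every $\Encrop$ expression the compiler emits is introduced by the Encryption Introduction rule of \refdef{def:rules} (whose premises bind $p$ and $k_e$ at the argument locations), $e$ must in fact be a non-$\Encrop$ expression; hence Encryption Justification (\refdef{def:justified}) supplies a binding for $\inv{k_e}$, the Decryption condition of \refdef{def:closed} then forces a binding $\Bind{p}{\locnew}{\Decr{\loc}{\loc_1}}$, and the $\Decrop$ store clause gives the \emph{decr} disjunct. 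Respect for hashing is analogous but simpler, using Hash Justification, the Check Hash condition, and the $\Hashop$/$\CHashop$ store clauses, where some care is needed to align the orientation of $\CHashop$ with its store clause. This reconciliation of ``not an encryption (resp.\ hash) expression for $t$'' with the ``non-$\Encrop$'' (resp.\ ``non-$\Hashop$'') hypotheses of the justification clauses is where a purely local reading of the conditions is insufficient and the global well-formedness of compiler-emitted expressions must be invoked; I anticipate it as the main obstacle.
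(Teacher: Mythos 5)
Your proposal matches the paper's own proof essentially step for step: the same valuation $\tv \eqdef \tlop ; \sto$, the same use of \reflem{lem:act-map-composition} plus the compilation invariant for the induction of $\tr$, and the same clause-by-clause verification with the same case splits on pair/encryption expressions and the same appeal to Encryption and Hash Justification for the disjunctive clauses. The only divergence is that you explicitly flag and propose to discharge the mismatch between ``$e$ is not an encryption expression for $\En{p}{k_e}$'' and the ``non-$\Encrop$'' hypothesis of Encryption Justification, a gap the paper's proof silently elides; your suggested fix via the shape of compiler-emitted expressions is the right way to close it.
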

\begin{proof}
  Let \tr be a transcript for \rl; %
  let \sto be a store for \rl that induces \tr.
  Viewing $\sto$ as a  relation, 
  let $\tv$ be the relational composition 
  \[ \tv \eqdef \tlop ; \sto
  \]
  That is
\[
    \tv \ t \ r \eqdef \exists\ l \ e, \Bind {t}{l} {e} \in \pr \land \
    \sto \ \loc \opteq r
\]
  To establish that \tr is a transcript for \pr it suffices to show
  that ${\tv}$ is a valuation that induces \tr.

  First:
  to show that ${\tv}$ induces \tr we want to show that 
\[
    \mapR  {\actmap{({\tv})}} \rl \ \tr
\]
  We  have 
\[
    \mapR \ ( \actmap{\tlop}) \ \rl \ (\trace \ \pr)
\]
  by definition of \tlop.

  We have
\[
    \mapR \ ({\actmap{\sto}}) \ (\trace \pr) \ \tr
\]
  since \tr is induced by \sto.

  By \reflem{lem:act-map-composition}, then, we have
  \begin{align*}
    \mapR\ (\actmap {(\sto ; \tlop) })\ \rl\ \pr ,
    &   \text{ that is, }
    \\
    \mapR\ (\actmap {\tv })\ \rl\ \pr 
  \end{align*}
  as desired.

  Next we want to show that ${\tv}$ is a valuation. %

  We consider the clauses of Definition~\ref{def:valuation} in turn.
  \begin{itemize}
  \item \rolvalop is functional on elementary terms:

    Given $t, r_1, r_2$ with  $t$ elementary,
    $\rolval{t}{r_1}$,  %
    and $\rolval{t}{r_2}$.
    We want to show $r_1 = r_2$.

    By definition of %
    $\rolval{t}{r_1}$  %
    and $\rolval{t}{r_2}$ %
    we have $\loc_1, e_1, \loc_2, e_2$ such that 
    \begin{align*}
      & \Bind{t}{\loc_1}{e_1} \text{ and } \sto (\loc_1) = r_1
      \\
      & \Bind{t}{\loc_2}{e_2} \text{ and } \sto (\loc_2) = r_2
    \end{align*}
    By the Check Equality Condition on \pr we have
    $\sameness{\loc_1}{\loc_{2}}$ in \pr. %
    Since \sto respects the sameness checks of \pr, we have
    $ \sto (\loc_1) = \sto (\loc_2)$ as desired.

  \item $\rolvalop$ respects sorts : 

    Given $t, r$ with  $t$ is elementary and 
    $\rolval{t}{r}$; we seek to establish that
    $\rtsrt \ r = \sortof t .$

    By definition of %
    $\rolval{t}{r}$  %
    we have $\loc$ and $e$ such that 
\[     \Bind{t}{\loc}{e} \text{ is in \pr and } \sto (\loc) = r
\]

    By the Check Sort Condition there is $\loc_1$ such that
    \begin{align*}
      \sameness{\loc}{\loc_1} \text{ and }
      \\
      \CSrt{\loc_1}{(\sortof t)}  \text{ in \pr.}
    \end{align*}
    Since \sto respects \CSrtop, 
    $$\rtsrt (\sto \loc_1) = \sortof t $$
    and since $\sameness{\loc}{\loc_1}$,
    $$\rtsrt r = \rtsrt (\sto \loc) = \sortof t$$
    as desired.

  \item  $\rolvalop$ respects pairing:

  Given $t_1, t_2, r$ %
  with $\rolval{\Prr{t_1}{t_2}}{r}$; %
  we seek  $r_1, r_2$ such that %
  $\rolval{t_1}{r_1}$, %
  $\rolval{t_2}{r_2}$, and %
  $\rtpair{r_1}{r_2} = r$.

  By definition of \tv we have  %
  $\loc$ and $e$ such that 
\[
    \Bind {\Prr{t_1}{t_2}} {\loc} {e} 
    \text{ is in \pr and } \sto (\loc) = r .
\]
  
  There are two cases: either $e$ is a pair expression for $\Pr{t_1}{t_2}$ or not.
  
  \begin{enumerate}
  \item If $e$ is a pair expression for $\Pr{t_1}{t_2}$ then
    we know that 
    $e$ is of the form $\Pair{\loc_1}{\loc_2}$  %
    and that there are $\loc_1, \loc_2, e_1,$ and $e_2$ such that 
    $\Bind{t_1}{\loc_1}{e_1}$ and $\Bind{t_2}{\loc_2}{e_2}$ are in \pr.

    Set $r_1$ to be $\sto \loc_1$ and
    $r_2$ to be $\sto \loc_2$.

    Then
    \[      \rolval {t_1}{r_1}
      \quad \text{ and } \quad
      \rolval {t_2}{r_2} .
    \]
    It remains to show
    $\rtpair {r_1}{r_2} = {r}$
    
    Since \sto respects \Prrop and pr has
    \[
      \Bind {\Prr{t_1}{t_2}} {\loc} {\Pair {\loc_1} {\loc_2} }
    \]
    we have 
    \[ { \rtpair\  (\sto \loc_1)\ (\sto \loc_2) = (\sto \loc) }
    \]
    which is to say
    \[      { \rtpair\ r_1 \; r_2 \; r }
    \]
    as desired.

  \item If $e$ is not a Pair expression for $\Pr{t_1}{t_2}$ then %
    we use the fact that the runtime satisfies
    \[
      \rtpair\ r_1 \ r_2 = r 
      \ \leftrightarrow \
        \rtfrst r = r_1 \land \rtscnd r = r_2 
      \]
    and so we exhibit $r_1$ and $r_2$ with  %
    $  \rtfrst r = r_1 \land \rtscnd r = r_2 . $

    Since $e$ is not 
    a pair expression for $\Pr{t_1}{t_2}$,
    the Pair Elimination Conditions hold. %
    So we have $\loc_1$ and $\loc_2$ such that
    \begin{align*}
      \Bind {t_1}{\loc_1}{\Frst{\loc}} \\
      \Bind {t_2}{\loc_2}{\Scnd{\loc}}
    \end{align*}

    Set $r_1 = \sto \loc_1$ and $r_2 = \sto \loc_2$
  \end{enumerate}
  Since \sto respects \Frstop  and \Scndop,
  \begin{align*}
    \rtfrst (\sto \loc) \opteq (\sto \loc_1) 
    &\text{ and }
    \rtscnd (\sto \loc) \opteq (\sto \loc_2) 
  \end{align*}
  that is, 
  \begin{align*}
    \rtfrst r  \opteq r_1
    &\text{ and }
    \rtscnd r \opteq r_2
  \end{align*}
   as desired.

  \item  $\rolvalop$ respects hashing:

    Given $t, r$ %
    with $\rolval{\Hs{t}}{r}$; we 
    seek
     $r_t$ such that
    $\rolval{t}{r_t}$ and
    $\rthash{r_t} = r$.

    By definition of \tv we have %
    $\loc_h, e_h, $ such that 
    \begin{align*}
      \Bind{t}{\loc_h}{e_h} \text{ and } \sto (\loc_h) = r
    \end{align*}

    By the Hash Justified Condition on \pr there are %
    $\loc_t$ and $e_t$ with
    \[ \Bind{t}{\loc_t}{e_t} \text{ in } \pr
    \]

    Then by the Check Hash Condition on \pr
    \begin{align*}
      \CHash {\loc_h}{\loc_t} \text{ is in } \pr
    \end{align*}
    Take $r_t$ to be $\sto r_t$; since \sto respects \CHashop, 
    \begin{align*}
      \sto \loc_h &= \rthash (\sto \loc_t) , 
      \quad \text{ that is, }    \\
     r_h &= \rthash r_t
    \end{align*}

  \item $\rolvalop$ respects quote:

    Given $s, r$ %
    with $\rolval{\Qt{s}}{r}$;
    we want to show $ r = \rtquote s$

    By definition of \tv we have  $\loc, e$ such that 
\[ \Bind{\Qt{s}}{\loc}{e} \text{ is in \pr and } \sto (\loc) = r
\]
    By the Check Quote Condition
    \[ \CQot{\loc}{s} \text{ is in \pr } \]
    Since \sto respects $\CQotop$, %
    $\sto \loc = \rtquote s$

  \item $\rolvalop$ respects key pairs

    Given $t_1, t_2, r_1, r_2$, %
    with $t_1$ and $t_2$ making a symbolic key pair and
    $\rolval{t_1}{r_1}$,
    $\rolval{t_2}{r_2}$; %
    we want to show
    $\rtkypr \ {r_1}\ {r_2} = true. $

    By definition of \tv we have  
    $\loc_1, e_1, \loc_2, e_2$ such that
    \begin{align*}
       \Bind{t_1}{\loc_1}{e_1} \text{ is in \pr and } \sto (\loc_1) = r_1
      \\
       \Bind{t_2}{\loc_2}{e_2} \text{ is in \pr and } \sto (\loc_2) = r_2
    \end{align*}

    By the Check Key Pair Condition
    \begin{align*}
      \exists \loc'_1 \; \loc'_2 \; e'_1 \; e'_2 ,
        \Bind{t_1}{\loc'_1}{e'_1} 
        \land\
        \Bind{t_2}{\loc'_2}{e'_2} \
        \land\
      \\
        \sameness{\loc_1}{\loc'_{1}},
        \land\
        \sameness{\loc_2}{\loc'_{2}}
        \land\
        \CKypr{\loc'_1}{\loc'_2}
    \end{align*}
    
        Let $r'_1 = \sto r_1$
        and $r'_2 = \sto r_2$.

        Since $\sameness{\loc_1}{\loc'_{1}},
        \land\ 
        \sameness{\loc_2}{\loc'_{2}}$, and the fact that \sto respects
        $\samenessop$ it suffices to show that 
        $\rtkypr \ {r'_1}\ {r'_2} = true. $

        But that follows from the facts that 
        $\CKypr{\loc'_1}{\loc'_2}$ is in \pr and \sto 
        respects \CKyprop

\item $\rolvalop$ respects encryption:

  Given $p, k_e, r_e$, %
  with $\rolval {\En{p}{k_e}} {r_e}$; %
  we want to establish the disjunctive encryption property.

  By definition of \tv we have  %
  $\loc$ and $e$ such that 
  \begin{align}
    \Bind {\En{p}{k_e}} {\loc} {e} &\in \pr \label{encr-given}
    \\
    \sto (\loc) &= r_e .  \label{sto-loc}
  \end{align}
  There are two cases: either $e$ is 
  an encryption expression for $\En{p}{k_e}$
  or not.
  
  \begin{enumerate}
  \item Suppose $e$ is   an encryption expression for $\En{p}{k_e}$,
    so that
    \begin{align} \label{encr-exp}
      \Bind {\En{p}{k_e}} {\loc_e} {\Encr {\loc_p} {\loc_k} } 
      &\in \pr
    \end{align}
    for some $\loc_p$ and $\loc_k$.
   We establish the encr condition, \ie, that 
        there exists $r_p, r_{ke}$ such that 
    \begin{itemize}
    \item $\rolval {p}{r_p}$
    \item $\rolval {k_e}{r_{ke}}$
    \item $\rtencr \; {r_p} \; {r_k} \;{r_e}$
    \end{itemize}

    Since $e$ is   an encryption expression for $\En{p}{k_e}$,
    there are $ e_p$ and $e_k$
    such that 
    \begin{align}
      \Bind{t_p}{\loc_p}{e_p} &\in \pr
      \\
      \Bind{t_k}{\loc_k}{e_k} &\in \pr
    \end{align}
    
    Set $r_p$ to be $(\sto \loc_p)$ and
    $r_{ke}$ to be $(\sto \loc_k)$.
    Then
    \begin{align}
      \rolval {p}{r_p} \label{eq:tvprp}
      \\
      \rolval {k_e}{r_{ke}} . \label{eq:tvkerke}
    \end{align}

    Since \sto respects \Encrop we have, by (\ref{encr-exp}),
    \[ { \rtencr (\sto \loc_p) (\sto \loc_k) (\sto \loc) }
    \]
    which is to say
    \begin{align} \label{eq:rtencr}
      { \rtencr r_p \; r_k \; r_e }.
    \end{align}

 The encr condition follows from 
    (\ref{eq:tvprp}), (\ref{eq:tvkerke}), and (\ref{eq:rtencr})

  \item Suppose $e$ is not 
  an encryption expression for $\En{p}{k_e}$.
    We establish the decr condition, \ie, that %
    there exist $r_p$ and $r_{kd}$ such that 
    \begin{itemize}
    \item $\rolval {p}{r_p}$
    \item $\rolval {\inv{k_e} } {r_{kd}}$
    \item $ \rtdecr\ r_e\ r_{kd} \opteq (\sto v_{p})$
    \end{itemize}

    By the {Encryption Justification} property applied to
    (\ref{encr-given}) there are
    $\loc_{kd}$ and $e_{kd}$ such that 
    \begin{align}
      \Bind {\inv{(k_e)}}{\loc_{kd}}{e_{kd}} &\in \pr
    \end{align}
    Set $r_{kd}$ to be $(\sto \loc_{kd})$, thus
    \begin{align} \label{eq:tvinvke}
      \tv {\inv{(k_e)}} {r_{kd}}
    \end{align}

    By the  {Decryption Condition} applied to
    (\ref{encr-given}) and (\ref{sto-loc}) we have
    \begin{align} \label{eq:decr}
      \Bind{p}{\loc_p}{\Decr{\loc}{\loc_{kd}}} &\in \pr
    \end{align}
    for some $\loc_p.$
    Set $r_{p}$ to be $(\sto \loc_{p})$, thus
    \begin{align} \label{eq:tvprp'}
      \tv \ {p} \ {r_p}
    \end{align}
    
    Since \sto respects \Decrop, we have, by (\ref{eq:decr}),
    \begin{align*}
      \rtdecr\ (\sto \loc) (\sto v_{kd}) \opteq (\sto v_{p})
    \end{align*}
    which is to say
    \begin{align}\label{eq:rtdecr}
      \rtdecr\ r_e\ r_{kd} \opteq r_p
    \end{align}

    The decryption condition follows from %
    (\ref{eq:tvinvke}), (\ref{eq:tvprp'}), and (\ref{eq:rtdecr})

  \end{enumerate}
\end{itemize}
\end{proof}

\begin{remark}
  It is instructive to 
  compare the treatments of   pairing and encryption %
  in the above proof.  
  We start with
  \[
    \Bind {\Prr {t_1}{t_2}} {\loc} {e}
    \; \text{ or } \; 
    \Bind {\En {p}{k}} {\loc} {e}
  \]
  In each instance the
  argument branched on whether or not the expression $e$ was a Pair
  expression, or Encryption expression, respectively.  In the
  affirmative case for each instance we argued directly that \tv
  satisfied the definition of valuation, and the arguments were
  precisely parallel.
  
  In the neutral cases
  we argued indirectly:
  \begin{itemize}
  \item  using the axiom %
    \[ \rtpair\ r_1 \ r_2 = r  \ \leftrightarrow \
      \rtfrst r = r_1 \land \rtscnd r = r_2 
    \]
    for pairing, and
  \item using the ``decr condition'' for encryption.
  \end{itemize}
  The pairing case was simpler. %
  Why, for the encryption case, did we not just invoke the equivalence similar
  to that for pairing, namely
  \[
    \rtencr\ r_p \ r_{k} \ r_e \ \leftrightarrow \
    \rtdecr \ r_e \ \inv{r_k} = r_p 
  \]
  instead of going to the trouble of defining
  the decr condition?   Here's the explanation.

  In the encryption proof, the runtime value $r$ is the value of the store \sto on
  the location for the key $k$. The equivalence
  about encryption refers to both $r_k$ and $\inv{r_k}$.  The latter
  would arise naturally as the value of \sto on $\inv{k}$. 
  But there is
  no reason to suppose that our \proc \pr has locations corresponding to
  each of $k$ and $\inv{k}$.  The situation for pairing is simpler in that
  the pairing equivalence involves no ``alien'' value analogous to
  $\inv{r_k}$.
  Logically speaking, deconstructing an encryption involves a minor
  premise, not so for  deconstructing a pair.

  In essence our proof in the encryption case is branching on whether
  the \proc has a binding for $k$ or a binding for $\inv{k}$; in the
  latter case we are using the fact that in a saturated \proc an
  encryption binding with a neutral expression is \emph{justified}.

  This (natural!) inconvenience that \pr probably does not
  have locations corresponding to each of $k$ and its inverse is
  precisely why the definition of valuation has its disjunctive
  character.

  Section~\ref{sec:strong-valuation} outlines an alternative approach
  to the semantics and our proofs that is related to this issue.
\end{remark}

\section{Discussion}

\subsection{Preserving Transcripts}
\label{sec:preserving-transcripts}

It is natural to consider a converse to Reflecting Transcripts, namely
that if role \rl is compiled to \proc \pr, then any activity consistent
with \rl is a possible execution of \pr.  Formally, we might seek a
result claiming:
\begin{quote}\em
  If \rl compiles to \pr then any transcript for \rl is a transcript for \pr.
\end{quote}
But in fact a ``preserving transcripts'' in this form does not hold for \molly.

\begin{example}
  \label{eg:no-preserving}
  Suppose $ch$ is a channel term, $k$ is a symmetric key, and $p$ is
  arbitrary.
  Consider the artificially simple role containing two transmissions of the 
  randomized encryption $\En{p}{k}$.
 \begin{align*}
    & \Prm{ch} ; \Prm{p} ; \Prm{k} \\          
    & \Snd{ch}{\En{p}{k}} \\
    & \Snd{ch}{\En{p}{k}} 
  \end{align*}
Now consider a term valuation \rolvalop in which
the term $\En{p}{k}$ is associated with 
a non-singleton set $R$ of values.
Then \rolvalop supports many transcripts in which 
the last two values sent are different values in $R.$

But any compiler that allocates only one location $\loc$ for 
the term $\En{p}{k}$ cannot have a valid transcript with different values in its last 2
places.   If \sto is a store giving rise to a transcript for  \proc, the
last two values must both be $\sto(\loc)$.
\qed \end{example}

This little example shows that any compiler supporting a Preserving
Transcripts theorem as stated above must, at least, allocate as many
locations to each role-term as there are occurrences of that term.
This seems artificial, especially since the role specification
language we have worked with does not provide any support for
ascribing different behaviors with different occurrences of a given
term.

\subsection{Strong Term Valuations}
\label{sec:strong-valuation}
This section should shed some light on the notion of valuation for a
role, specifically the definition of a valuation ``respecting encryption.''

Here is another natural definition of valuation for a term.
It replaces the disjunctive character of our official definition of
respecting encryption, and strengthens the requirement concerning key
pairs (compare Definition~\ref{def:valuation}).

\begin{definition}[Strong Valuation]
  \label{def:strong-valuation}
  A relation   $\rolvalop \subseteq \Term \times \rtval$
  is a \emph{strong term valuation} if it satisfies the conditions in
  \refdef{def:valuation} with the following two strengthening of the
  conditions there about encryption and key pairs.

\begin{itemize}

 \item  \rolvalop strongly respects key pairs:

    For all $t_1, t_2, r_1$, %
    \begin{itemize}
    \item 
    if $(t_1, t_2) $ makes a key pair and
    $\rolval{t_1}{r_1}$, then
    \begin{itemize}
    \item there exists $r_2$ with 
      $\rolval{t_2}{r_2}$,  and
    \item for all  $r_2$ with 
      $\rolval{t_2}{r_2}$,
    $\rtkypr{r_1}{r_2} = true$
    \end{itemize}
    \item 
    if $(t_2, t_1) $ makes a key pair and
    $\rolval{t_2}{r_2}$, then
    \begin{itemize}
    \item there exists $r_1$ with 
      $\rolval{t_1}{r_1}$,  and
    \item for all  $r_1$ with 
      $\rolval{t_1}{r_1}$,
      $\rtkypr{r_2}{r_1} = true$
    \end{itemize}
  \end{itemize}

   \item \rolvalop strongly respects encryption:

    For all $p, k_e, r_e$, %
    if $\rolval {\En{p}{k_e}} {r_e}$ then 
    there exists $r_p, r_{ke}$ such that 
    \begin{itemize}
    \item $\rolval {k_e}{r_{ke}}$
    \item $\rolval {p}{r_p}$
    \item $\rtencr {r_p}{r_{ke}}{r_e}$
    \end{itemize}
  \end{itemize}
\end{definition}

Clearly if \rolvalop is a strong valuation then it is a valuation.
What we show in this section is that if \rolvalop is a valuation that
fails to be a strong valuation there is a natural  way to extend it to a
strong valuation.

Recall that when $r$ is a runtime value we use $\inv{r}$ to refer to
the unique value $r'$ acting as runtime inverse of $r$
(\refdef{def:rt-inverse} ), even though $r'$ might not be feasibly
computable from $r$.
\begin{definition} \label{def:completion}
  Let \rolvalop be a term valuation. %
  Define the \emph{completion} $\completion{\rolvalop}$ of \rolvalop to be%
  \[
    \rolvalop \; \cup \; \{ (\inv{t}, \inv{r}) \mid (t,r) \in \rolvalop \}
  \]
\end{definition}

\begin{lemma} \label{lem:pre-val-val}
  If \rolvalop is a term valuation, %
  then $\widehat{\rolvalop}$ is a strong term valuation.
\end{lemma}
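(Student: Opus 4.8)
The plan is to verify directly that $\completion{\rolvalop}$ meets every clause of \refdef{def:valuation}, with clauses~6 and~7 replaced by their strong forms. The organizing observation is that both the term inverse $\inv{(\cdot)}$ (\refdef{def:rt-inverse}) and the runtime inverse $\inv{(\cdot)}$ (a total function on $\rtval$, as noted after the definition of $\rtinvop$) are involutions, and each fixes every object that is not ``key-shaped'': $\inv{t}=t$ for every term $t$ not part of a symbolic key pair, and $\inv{r}=r$ for every value $r$ whose sort is neither $\Akey$ nor $\Ikey$. Consequently the only genuinely new pairs the completion contributes are key-partner pairs. For an elementary non-key term $t$, clause~2 for $\rolvalop$ gives $\rtsrt\,r\notin\{\Akey,\Ikey\}$, hence $\inv{r}=r$, so $\completion{\rolvalop}$ and $\rolvalop$ agree at $t$; and for the compound terms $\Prr{t_1}{t_2}$, $\En{p}{k_e}$, $\Hs{t_1}$, $\Qt{s}$ — all non-key — membership of $(t,r)$ in $\completion{\rolvalop}$ reduces, using the runtime sort discipline on compound values, to membership in $\rolvalop$.

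Given this, clauses 3, 4, 5 (respecting pairing, hashing, quote) transfer immediately: their antecedents mention a compound term, so the hypothesis already holds in $\rolvalop$, the corresponding clause of \refdef{def:valuation} supplies witnesses lying in $\rolvalop\subseteq\completion{\rolvalop}$, and we are done. The first two clauses require the key terms to be handled. For functionality on elementary terms the only nontrivial case is $t$ an asymmetric key; there I would use that the term inverse is an involution (so both representations of a pair $(t,r_i)$ in $\completion{\rolvalop}$ trace back to $\rolvalop$-bindings of $t$ or of its unique partner $\inv{t}$), together with clause~6 for $\rolvalop$ and the bijection axioms~(\ref{eq:bijection1}),~(\ref{eq:bijection2}), to force $r_1=r_2$. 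For respecting sorts I would combine clause~2 for $\rolvalop$ with the fact that $\rtpubof$ flips $\Ikey$ to $\Akey$ (axiom~(\ref{eq:pub-of-sorts})), so that passing to $\inv{r}$ flips the runtime sort in exactly the same way that $\inv{t}$ flips the term sort.

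For the strong key-pair clause, given a symbolic key pair and $\rolval{t_1}{r_1}$ in $\completion{\rolvalop}$, existence of a partner value is immediate, since the completion contains $(\inv{t_1},\inv{r_1})=(t_2,\inv{r_1})$, and $\rtkypr\ r_1\ \inv{r_1}$ holds by the definition of the runtime inverse on key-sorted values; the universally quantified part then follows from functionality on elementary terms, which forces any witness to equal $\inv{r_1}$. The symmetric sub-case (with $(t_2,t_1)$ a key pair) is identical after exchanging roles.

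The crux is the strong encryption clause, and this is where the completion earns its definition. Suppose $\rolval{\En{p}{k_e}}{r_e}$ holds in $\completion{\rolvalop}$, so that $(\En{p}{k_e},r_e)\in\rolvalop$, and apply the disjunctive encryption condition (clause~7) for $\rolvalop$. If the encr disjunct holds we are already done, its witnesses lying in $\rolvalop\subseteq\completion{\rolvalop}$. The interesting case is when only the decr disjunct holds, yielding $r_p,r_{kd}$ with $\rolval{p}{r_p}$, $\rolval{\inv{(k_e)}}{r_{kd}}$ in $\rolvalop$ and $\rtdecr\ r_e\ r_{kd}\opteq r_p$. Here the completion supplies exactly the missing encryption key: since $(\inv{(k_e)},r_{kd})\in\rolvalop$ we have $(k_e,\inv{r_{kd}})\in\completion{\rolvalop}$, so setting $r_{ke}=\inv{r_{kd}}$ gives $\rolval{k_e}{r_{ke}}$; and because $\rtinvop$ is symmetric we have $\rtinv{r_{ke}}{r_{kd}}$, whence axiom~(\ref{eq:decr-encr}) turns the decryption fact into $\rtencr\ r_p\ r_{ke}\ r_e$, establishing the strong (encr) disjunct. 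The obstacle I anticipate is bookkeeping rather than ideas: keeping straight which objects the two inverse operations fix, and discharging the reduction ``non-key term $\Rightarrow$ agreement with $\rolvalop$'' for compound values, which relies on compound runtime values never carrying sort $\Akey$ or $\Ikey$. Once that sort discipline is in hand, every clause reduces either to the corresponding clause for $\rolvalop$ or to a one-line application of~(\ref{eq:decr-encr}) or the bijection axioms.
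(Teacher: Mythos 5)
Your proposal is correct and follows essentially the same route as the paper's proof: observe that the completion only adds key-partner pairs (so the non-key clauses transfer directly from $\rolvalop$), handle the key clauses via the bijection axioms and functionality, and discharge the strong encryption clause by taking $r_{ke}=\inv{r_{kd}}$ from the decr disjunct and applying axiom~(\ref{eq:decr-encr}). You are in fact slightly more careful than the paper in two places --- you treat the encr disjunct explicitly and you flag the tacit assumption that compound runtime values never carry sort $\Akey$ or $\Ikey$, an assumption the paper's own one-line reduction (``only pairs with elementary terms are added'') also silently relies on.
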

\begin{proof}
  Since the only difference between \rolvalop and 
  $\widehat{\rolvalop}$ %
  is the addition of some pairs whose term is
  elementary, 
  $\widehat{\rolvalop}$  certainly respects pairing, hashing and quotes.
  It is easy to see that 
  $\widehat{\rolvalop}$  is functional on elementary terms and respects
  sorts.
  The strong key pair condition holds for 
  $\widehat{\rolvalop}$ by construction.

  It remains to show that 
  $\widehat{\rolvalop}$  respects encryption in the sense of 
  Definition~\ref{def:strong-valuation}.

  So choose %
  $p, k_e, r_e$ such that %
  $\rolval {\En{p}{k_e}} {r_e}$.  Since $\widehat{\rolvalop}$ satisfies
  the decr condition we have $r_p, r_{kd}$ such that
  \begin{itemize}
  \item $\rolval {\inv{(k_e)}}{r_{kd}}$
  \item $\rolval {p}{r_p}$
  \item $\rtdecr {r_e}\ {r_{kd}} = Some\ {r_p}$ .
  \end{itemize}
  We seek
  $r_p, r_{ke}$ such that 
  \begin{itemize}
  \item $\rolval {k_e}{r_{ke}}$
  \item $\rolval {p}{r_p}$
  \item $\rtencr\ {r_p}\ {r_k}\ {r_e}$ .
  \end{itemize}
  Of course the $r_p$ we seek is the given $r_p$; then take
  $r_{ke}$ to be $\inv{r_{kd}}$.  We have %
  $\rolval {\inv{(k_e)}}{r_{kd}}$  by the construction of 
  $\widehat{\rolvalop}$, and
  $(\rtencr\ {r_p}\ {r_k}\ {r_e})$ holds by the axiom
  \[
    \rtencr\ r_p \ r_{k} \ r_e \ \leftrightarrow \
    \rtdecr \ r_e \ \inv{r_k} = r_p 
\]
\end{proof}

Lemma~\ref{lem:pre-val-val} shows that we could have worked with
strong valuations all along.

Strong valuation is perhaps more natural as a formalization of the
meanings of symbolic terms in a role.  But valuations as we originally
defined them are the right definition for working with the data
\emph{explicitly available} from a given role: we typically only
``have'' one half of a key pair.

If we had used the notion of strong valuation when defining transcripts
the main change in our development would
come in the proof of the Reflecting Transcripts theorem. %
There we had
to construct a valuation \tv.  This valuation was defined quite
simply from the \proc and the store as a composition.  If we had
needed to build a strong valuation we'd have built our (strong) valuation as
the completion of this composition.  It would have worked out but all
the arguments would have been a bit more clumsy since we'd have had to
take the completion into account at every step in the argument.

\section{Future Work}
\label{sec:future-work}

{\molly} currently handles just a minimum set of primitives to
exercise the relevant algorithm ideas and proof techniques.   We expect
that it will be straightforward to expand to other operations such as
signatures and richer notions of tupling.   Extensions to the message
algebra that have interesting \emph{semantic} consequences, such as
Diffie-Hellman operations or  rich equational theories, will demand
some care but, we expect, no changes to our basic approach.

A more significant extension of the current work will connect it with
protocol \emph{analysis}.  Specifically we plan to integrate {\molly}
into the \cpsa\ ecosystem, in a way such that a protocol designer,
having established some security goals for her protocol using a \cpsa\
analysis, can generate implementations of the various
protocol roles, obtaining a version satisfying those goals.  The main challenge
here lies in the fact that our current correctness claims, based on
transcripts, relate the symbolic and runtime traces of individual
roles.  To draw conclusion about the runtime semantics of a full
{protocol} execution requires attention to the interactions among
transcripts.
This is a distributed activity involving a number of different
compliant participants as well as possibly adversarial actions.  We
will turn to that larger question in a subsequent paper, using the
current results as a basis.


\subsection*{Acknowledgments.}  We are grateful  to John Ramsdell for
several helpful discussions.

\bibliographystyle{alpha}
\bibliography{djd,cpsa}

\end{document}

\typeout{get arXiv to do 4 passes: Label(s) may have changed. Rerun}
